\newcommand{\ags}{\mathcal{N}}
\newcommand{\itms}{\mathcal{M}}
\newcommand{\sz}{T}
\newcommand{\cst}{\mathsf{Cost}}
\newcommand{\utp}{\ensuremath{\vec{u}}}
\newcommand{\tauv}{\ensuremath{\vec{\tau}}}
\newcommand{\dist}{\mathsf{dist}}
\newcommand{\sw}{\mathsf{sw}}
\newcommand{\mtch}{A}
\newcommand{\opt}{\mathsf{\mtch^*}}
\newcommand{\osw}{\mathsf{s^*}}
\newcounter{note}[section]
\newtheorem{theorem}{Theorem}[section]
\newtheorem{corollary}[theorem]{Corollary}
\newtheorem{lemma}[theorem]{Lemma}
\theoremstyle{definition}
\newtheorem{definition}[theorem]{Definition}
\newtheorem{example}[theorem]{Example}
\theoremstyle{remark}
\newtheorem{remark}[theorem]{Remark}
\DeclarePairedDelimiter{\set}{\{}{\}}
\DeclareMathOperator{\E}{\mathbb{E}}
\title{\bf The Distortion of Threshold Approval Matching}
\author[1]{Mohamad Latifian}
\affil[1]{Department of Computer Science, University of Toronto, Canada}
\author[2]{Alexandros A. Voudouris}
\affil[2]{School of Computer Science and Electronic Engineering, University of Essex, UK}
\date{}
\begin{document} 

\allowdisplaybreaks

\maketitle

\begin{abstract}
We study matching settings in which a set of agents have private utilities over a set of items. Each agent reports a partition of the items into approvals sets of different threshold utility levels. Given this limited information on input, the goal is to compute an assignment of the items to the agents (subject to cardinality constraints depending on the application) that (approximately) maximizes the social welfare (the total utility of the agents for their assigned items). We first consider the well-known, simple one-sided matching problem in which each of $n$ agents is to be assigned exactly one of $n$ items. We show that with $t$ threshold utility levels, the distortion of deterministic matching algorithms is $\Theta(\sqrt[t]{n})$ while that of randomized algorithms is $\Theta(\sqrt[t+1]{n})$. We then show that our distortion bounds extend to a more general setting in which there are multiple copies of the items, each agent can be assigned a number of items (even copies of the same one) up to a capacity, and the utility of an agent for an item depends on the number of its copies that the agent is given.  
\end{abstract}

\section{Introduction}\label{sec:intro}
The assignment of papers to reviewers in conference management systems like CMT, HotCRP, EasyChair and OpenReview is computed using bidding information that classifies the papers into sets based on whether the reviewers are, for example, eager, willing, or not willing to handle them. In a sense, this process defines a collection of {\em threshold levels} that the reviewers (or, more generally, {\em agents}) can use to partition the papers (or, more generally, {\em items}) into associated {\em approval sets} based on their preferences (which can be dependent on their experience, their interests, and so on).  

Eliciting only threshold approvals rather than more detailed information about the underlying utility preferences of the agents for the items inevitably leads to inefficiency in terms of natural, cardinal objectives such as the well-known {\em social welfare} (the total utility). Typically, the loss of efficiency of decision-making methods that have access only to incomplete information is captured by the notion of {\em distortion}, which is defined as the worst-case ratio of the maximum possible social welfare over that of the computed solution. The distortion was originally used for social choice settings (such as voting) where decisions are made only based on ordinal information (rankings) \citep{procaccia2006distortion,boutilier2015optimal}, but has recently been studied for settings in which different types of information is available or can be elicited (e.g., see \citep{amanatidis2021peeking,amanatidis2021matching,mandal2019efficient,mandal2020optimal,ma2021matching}). 

In the matching setting we study in this work, which captures various interesting applications (such as the {\em paper assignment problem} in peer-reviewing that we briefly introduced above, as well as general {\em constrained resource allocation}), the threshold approvals reported by the agents is a type of information that lies in-between fully cardinal and fully ordinal. Hence, while we cannot hope to achieve full efficiency, we can hope to achieve distortion better than what is possible just with ordinal preferences, depending on how detailed the threshold approvals are. In particular, we are interested in the possible {\em tradeoffs} between the distortion and the number of threshold levels both for when allocations are computed deterministically (which is the most natural way of doing so in social choice problems), as well as when randomization can be exploited.

\subsection{Our Contribution}
We start by considering the fundamental {\em one-sided matching problem} (also known as {\em house allocation}) to introduce the main ideas of our techniques, before turning to a more general setting. In one-sided matching, there is a set of $n$ agents with utilities for a set of $n$ items; we assume that the utilities satisfy the standard unit-sum assumption \citep{aziz19justifications}. The utilities are private and are not explicitly reported by the agents. Instead, for a number $t$ of decreasing threshold values, each agent reports a collection of $t$ {\em approval sets} consisting of items of different utility level; in particular, each approval set is associated with a threshold value and includes all items for which the agent has utility that is at least this threshold. Given the approval sets as input, our goal is to determine a one-to-one matching between agents and items so that the social welfare (total utility of the agents for their assigned items) is maximized. 

We show tight bounds on the best possible distortion achieved by matching mechanisms for any number $t$ of thresholds. In particular, we show a bound of $\Theta(\sqrt[t]{n})$ for deterministic mechanisms, and a bound of $\Theta(\sqrt[t+1]{n})$ for randomized ones. The lower bounds are presented in Section~\ref{sec:lower} and the upper bounds in Section~\ref{sec:upper}. To put the bounds into perspective, we note that just one threshold is sufficient to obtain distortion $\Theta(n)$ for deterministic algorithms, beating the best possible distortion of $\Theta(n^2)$ that can be achieved using ordinal information~\citep{amanatidis2021matching}. Similarly, a distortion of $\Theta(\sqrt{n})$ can be achieved with randomization, matching the best possible distortion achieved by ordinal randomized algorithms~\citep{filos2014RP}.


In Section~\ref{sec:extention}, we turn our attention to a more general setting where the agents have {\em capacities}, indicating the maximum number of items they can receive, and the items have {\em supplies}, indicating the number of copies of them that are available. We make a budget-balance assumption that the total capacity is asymptotically of the same order of the total supply; for example, in the paper assignment problem, all papers must receive a number of reviews, and so the total capacity must be sufficiently larger than the total supply. We also assume that, when agents can receive multiple copies of an item, their utility depends on the number of these copies, and thus the copies are not treated as independent items. Our goal is to compute an allocation of items to agents, such that the capacity and the supply constraints are satisfied, and the social welfare is maximized. 

As this setting is a generalization of the one-sided matching (in which the number of agents is equal to the number of items, and there are unit capacities and supplies), our lower bounds from Section~\ref{sec:lower} extend directly. For the upper bounds, we show that, for $t$ thresholds, the best possible distortion achieved by deterministic mechanisms is $O(c \cdot \sqrt[t]{\sz})$, while the best distortion achieved by randomized mechanisms is $O(c \cdot \sqrt[t+1]{\sz})$, where $\sz$ is the total available supply (or capacity) and $c$ is a parameter that depends either on the maximum capacity or the ratio between the number of items and agents. From this, we get bounds $\Theta(\sqrt[t]{n})$ and $\Theta(\sqrt[t+1]{n})$ when the capacities and the supplies are constant.

\subsection{Related Work}
The distortion was originally defined by \citet{procaccia2006distortion} to measure the worst-case loss in social welfare when voting decisions are made using only ordinal information. Since then, the distortion has been studied for several different voting problems, including utilitarian voting~\citep{procaccia2006distortion,boutilier2015optimal,Caragiannis2011embedding,caragiannis2017subset,ebadian2022optimized}, 
metric voting~\citep{anshelevich2018approximating,CSV22,charikar2022randomized,charikar24breaking,gkatzelis2020resolving,pluralityveto2022}, and combinations of the two~\citep{gkatzelis2023both}. It has also been studied for social choice problems beyond voting, such as one-sided matching that we also consider in this paper~\citep{filos2014RP,amanatidis2021matching,amanatidis2022dice}, as well as other clustering and graph problems~\citep{abramowitz2017utilitarians,anshelevich2016blind,caragiannis2024low}. See the survey of \citet{survey2021} for an introduction to the distortion framework and the various models that have been considered. 

Our paper follows a relatively recent stream of papers within the distortion literature that have considered elicitation methods beyond ordinal information. In this direction, \citet{mandal2019efficient,mandal2020optimal} showed tradeoffs between the best possible distortion and a communication complexity measure (the number of bits the agents can use to report information) for utilitarian voting. Results of similar flavour for metric voting have also been showed, for example, by \citet{kempe2020communication}.

More related to our work, in a series of papers, \citet{amanatidis2021peeking,amanatidis2021matching,amanatidis2022dice} studied voting and matching settings in which the agents provide ordinal information and, on top of that, are capable of answering {\em value queries} about their utilities for specific alternatives. They showed lower and upper bounds on the distortion of deterministic mechanisms that are functions of the number of queries per agent that are of similar to ours in the sense that the distortion decreases with the number of queries; some of their lower bounds (related to the number of queries required to achieve constant distortion) were recently improved by \citep{caragiannis2023impartial}. Another related paper is that of \citet{ma2021matching} who considered the one-sided matching problem when the agents can answer binary threshold queries about whether their utility for specific alternatives is larger than appropriately chosen thresholds. Using an approach similar to that of \citeauthor{amanatidis2021matching}, they showed bounds on the distortion of deterministic mechanisms that is a function of the number of queries in terms of the social welfare among matchings that satisfy properties such as Pareto or rank optimality. Ignoring differences in the models, the elicitation methods in these papers are related to the one we consider since the threshold approval sets can be computed using a number of (value or binary threshold) queries. Hence, our elicitation method is in a sense a bit more demanding. However, for the setting we focus here, we are able to show asymptotically tight bounds not only for deterministic mechanisms, but also for randomized ones, which have not been studied before.  

Threshold approvals have also been recently explored in various other works on the distortion of voting mechanisms, most notably by \citet{ebadian2023approval} who showed that a single, appropriately chosen threshold is sufficient to achieve a distortion of $O(\sqrt{m})$ in utilitarian single-winner voting with $m$ alternatives. In metric voting, \citet{anshelevich2024threshold} showed improved distortion bounds using an approval set per agent computed by a threshold value that is relative (rather than absolute) to the distance from the top-ranked alternative. Threshold approvals have also been considered in the context of participatory budgeting by \citet{benade2021participatory}, and voting under truthfulness constraints by \citet{bhaskar2018truthful}. All these works use just a single threshold, whereas we here explore the full potential of this elicitation method (for matching problems, rather than voting) using multiple thresholds and show tight bounds on the possible distortion.

\section{The One-Sided Matching Problem}\label{sec:prelim}
We start with the simple {\em one-sided matching} setting to express the core idea; in Section~\ref{sec:extention}, we show that our results extend to a more general setting that more accurately captures applications such as paper assignment in peer reviewing. Let $\ags$ be a set of $n$ \textit{agents} and $\itms$ be a set of $n$ \textit{items}. Agent $i$ has a \textit{utility function} $u_i: \itms \to [0, 1]$ over the items. We assume that these utility functions satisfy the unit-sum assumption, which means for each agent $i \in \ags$, $\sum_{j \in \itms} u_i(j) = 1$. Together, these utility functions form the \textit{utility profile }$\utp$. A \textit{matching} of the items to the agents is a bijection $\mtch:  \ags \to \itms$. With a slight abuse of notation, we use $A_i = \mtch(i)$ to refer to the item matched to agent $i$, and also $A(a)$ to refer to the agent matched to item $a$. We define the \textit{social welfare} of a matching $\mtch$ under utility profile $\utp$ to be the total utility of the agents for the items they are matched to, i.e., 
$$\sw(\mtch, \utp) = \sum_{i \in \ags} u_i(\mtch_i) = \sum_{a \in \itms} u_{A(a)}(a).$$ 
The goal is to compute a matching with high social welfare in the worst case.
For ease of notation, we will drop $\utp$ from $\sw(\mtch,\utp)$ whenever it is clear from context.

\paragraph{Elicitation method.}
In this paper, we focus on eliciting threshold approval votes. A \textit{threshold vector} $\tauv = (\tau_1, \ldots, \tau_t)$, we ask each agent $i$ to submit $t$ disjoint {\em threshold approval subsets} of $\itms$, denoted by $S_{i,1}, \ldots, S_{i,t}$, where $S_{i,k}$ includes the items for which the agent has utility in [$\tau_{k-1}, \tau_k)$, with $\tau_0 := 1$. In other words, $S_{i, k} = \{j \in \itms \colon \tau_{k-1} \geq  u_i(j) > \tau_k\}$. All these $n \times t$ threshold approval sets form the \textit{input profile} $\mathbf{S}$. Note that different utility profiles might induce the same input profile. We say that a utility profile $\utp$ is consistent with an input profile $\mathbf{S}$ (in which say we write $\utp \rhd \mathbf{S}$) if for each agent $i\in \ags$, $k \in [t]$, and $j \in S_{i, k}$, $\tau_{k-1}\geq  u_i(j) > \tau_k$.

\paragraph{Mechanisms and distortion.}
A \textit{mechanism} $f$ defines a threshold vector $\tauv$, takes an input profile $\mathbf{S}$ based on $\tauv$, and then outputs a matching $f(\mathbf{S})$ of the items to the agents. The distortion of a matching $\mtch$ on input profile $\mathbf{S}$ is defined as:
$$\dist(\mtch, \mathbf{S}) =  \max_{\utp \rhd \mathbf{S}} \frac{\sw(\opt, \utp)}{\sw(\mtch, \utp)},$$ where $\opt$ is the matching with the maximum social welfare with respect to $\utp$. The distortion of a matching mechanism $f$ is defined as the worst case distortion of $f$ on any input profile:
$$\dist(f) = \sup_{\mathbf{S}} \dist(f(\mathbf{S}), \mathbf{S}).$$

\section{Lower bounds} \label{sec:lower}
In this section we show lower bounds on the best possible distortion achievable by deterministic and randomized mechanisms for the one-sided matching problem. In particular, for mechanisms that use $t \geq 1$ thresholds, we show a lower bound of $\Omega(\sqrt[t]{n})$ for deterministic mechanisms and a lower bound of $\Omega(\sqrt[t+1]{n})$ for randomized mechanisms. We start by showing a technical lemma that holds for randomized mechanisms that will be useful in establishing the lower bounds in several cases. For a randomized mechanism $f$, denote by $p(i,a)$ the probability that item $a$ is assigned to agent $i$ according to $f$.

\begin{restatable}{lemma}{sumProb}
\label{lem:sum-probabilities}
For any subset of items $M \subseteq \itms$, let $A_M$ be the matching of the items in $M$ to the agents with minimum sum of probabilities with respect to $f$. 
Then, $\sum_{a \in M} p(A_M(a),a) \leq 1.$
\end{restatable}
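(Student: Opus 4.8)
The plan is to read the numbers $p(i,a)$ as the entries of a doubly stochastic matrix and to certify the bound by exhibiting one explicit point of the Birkhoff polytope. Since $f$ always returns a bijection between the $n$ agents and the $n$ items, we have $\sum_{i\in\ags} p(i,a)=1$ for every item $a$ and $\sum_{a\in\itms} p(i,a)=1$ for every agent $i$, so $P=(p(i,a))_{i\in\ags,\,a\in\itms}$ is doubly stochastic. The quantity $\sum_{a\in M}p(A_M(a),a)$ is, by definition of $A_M$, the minimum of $\sum_{a\in M}p(\phi(a),a)$ over injections $\phi\colon M\hookrightarrow\ags$. Because $|M|\le n$, every such injection extends to a bijection $\sigma\colon\itms\to\ags$ with the same value $\sum_{a\in M}p(\sigma(a),a)$, and every bijection restricts to such an injection; hence this minimum equals the minimum of the linear functional $X\mapsto\sum_{a\in M}\sum_{i\in\ags}X_{ia}\,p(i,a)$ over permutation matrices $X$. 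By Birkhoff--von Neumann, a linear functional over the set of permutation matrices attains the same minimum over the whole doubly stochastic polytope, so it suffices to produce a single doubly stochastic matrix $X$ with $\sum_{a\in M}\sum_i X_{ia}\,p(i,a)\le 1$.

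Assuming $n\ge 2$ (the case $n=1$ is trivial, since then $|M|\le 1$ and the unique matching has sum at most $1$), the matrix that works is the complement $X_{ia}=\frac{1-p(i,a)}{n-1}$: it is nonnegative, and all of its row and column sums equal $1$ because the corresponding sums of the $p(i,a)$ equal $1$. For this $X$,
\[
\sum_{a\in M}\sum_{i\in\ags}X_{ia}\,p(i,a)
=\frac{1}{n-1}\sum_{a\in M}\Big(\sum_i p(i,a)-\sum_i p(i,a)^2\Big)
=\frac{1}{n-1}\Big(|M|-\sum_{a\in M}\sum_i p(i,a)^2\Big).
\]
If $|M|\le n-1$ this is at most $\frac{|M|}{n-1}\le 1$ since the sum of squares is nonnegative. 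If $|M|=n$ then $M=\itms$, and Cauchy--Schwarz over the $n^2$ entries gives $\sum_{a\in\itms}\sum_i p(i,a)^2\ge\frac{1}{n^2}\big(\sum_{a,i}p(i,a)\big)^2=1$, so the expression is at most $\frac{1}{n-1}(n-1)=1$. Either way the bound follows.

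The point worth flagging --- and the reason the naive approach fails --- is that one cannot average over the randomness of $f$ directly: taking the matching $f$ actually produces (equivalently, the feasible point $X_{ia}=p(i,a)$ for $a\in M$) only yields $\sum_{a\in M}\sum_i p(i,a)^2$, which may be as large as $|M|$. The correct move is to steer the probability mass away from wherever $f$ concentrates it, i.e.\ to use the complementary doubly stochastic matrix above; the Cauchy--Schwarz step is then exactly what handles the tight boundary case $M=\itms$. The rest is routine verification.
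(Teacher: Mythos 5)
Your proof is correct, but it takes a genuinely different route from the paper's. The paper argues greedily and inductively: it builds a matching one pair at a time, always picking the remaining pair $(i,a)$ of smallest probability, and shows via the pigeonhole principle that after $j$ steps the accumulated sum is at most $j/|M|$; the key observation making the induction go through is that, by greediness, the agent chosen at step $\ell$ has already ``spent'' probability mass at least $P_{\ell-1}$ on the previously assigned items. Your argument instead casts the minimization over matchings of $M$ as a linear program over the Birkhoff polytope: you extend injections $M \hookrightarrow \ags$ to permutations, invoke Birkhoff--von Neumann to replace the minimum over permutation matrices by the minimum over all doubly stochastic matrices, and then certify the bound by exhibiting a single feasible point, namely the complementary matrix $X_{ia} = \frac{1-p(i,a)}{n-1}$, together with a short computation (Cauchy--Schwarz handling the edge case $|M|=n$). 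Both proofs are valid. Yours is arguably slicker --- it replaces the induction and pigeonhole bookkeeping with one explicit certificate --- at the cost of invoking Birkhoff--von Neumann, whereas the paper's argument is entirely elementary and self-contained. Your concluding remark that the naive feasible point $X=P$ does \emph{not} work (it gives $\sum_{a\in M}\sum_i p(i,a)^2$, which can be as large as $|M|$) is a good sanity check and correctly identifies why one must ``anti-align'' with $p$.
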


\begin{proof}
We will prove the claim by constructing a matching $A$ with sum of probabilities at most $1$; then, since $A_M$ is the matching with minimum sum of probabilities, the same must hold for it as well. 
Consider the following greedy algorithm: Starting with $T = M$ and $N = \ags$, in each iteration, find the pair $(i,a) \in N \times T$ with the minimum possible $p(i,a)$, assign $a$ to $i$, and then remove $a$ and $i$ from $T$ and $N$, respectively. 

Let $P_j$ be the sum of probabilities of the assigned items at the end of the $j$-th iteration. We will show by induction that $P_j \leq j/|M|$. For $j=1$, by the pigeonhole principle, we can find a pair with $p(i, a) \leq 1/|M|$. So, we now assume that our hypothesis holds for any $j < \ell$, and want to prove that $P_\ell \leq \ell / |M|$. Let $(i, a)$ be a pair with minimum $p(i,a)$ in the $\ell$-th iteration. 
Since the algorithm is greedy, we know that for any item $b$ that was previously assigned to agent $A(b)$, $p(A(b), b) \leq p(i, b)$. 
This means that the sum of the probabilities of $i$ receiving any item assigned in a previous iteration is at least $P_{\ell-1}$. 
Hence, by the pigeonhole principle again,
$p(i, a) \leq \frac{1-P_{\ell-1}}{|M|-\ell+1} $, which implies that
\begin{align*}
P_\ell = P_{\ell-1} + p_{i, a} 
&\leq P_{\ell-1} + \frac{1-P_{\ell-1}}{|M|-\ell+1} \\
&\leq \frac{\ell-1}{|M|}+\frac{1-\frac{\ell-1}{|M|}}{|M|-\ell+1} = \frac{\ell-1}{|M|}+\frac{\frac{|M|-\ell+1}{|M|}}{|M|-\ell+1} = \frac{\ell}{|M|}.
\end{align*}
Consequently, we overall have that $\sum_{a \in M} p(A_M(a),a) \leq P_{|M|} \leq 1$.
\end{proof}

We are now ready to show the lower bounds via a sequence of lemmas capturing different cases. The first lower bound depends on the ratio of consecutive threshold levels and holds for any mechanism (randomized or deterministic). 

\begin{restatable}{lemma}{thresholdGap}
\label{lem:threshold_gap}
Consider a threshold vector $\tauv = (\tau_1, \ldots, \tau_t)$, and let $k \in [t]$ be such that $\delta = \tau_{k-1}/\tau_{k}$ is the largest multiplicative gap between two consecutive thresholds (assuming $\tau_0 = 1$).
Then, the distortion of any matching mechanism $f$ that uses $\tauv$ is $\Omega(\delta)$.
\end{restatable}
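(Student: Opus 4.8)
The plan is to build a family of utility profiles that all induce the same input profile but differ in which agent-item pairs carry the ``heavy'' utility mass, forcing any mechanism to be wrong on a constant fraction of them. Fix the index $k$ achieving the largest gap $\delta = \tau_{k-1}/\tau_k$. I would construct an instance in which, for each agent $i$, there is a common block of items $B$ (say $|B| = \delta$, up to rounding) such that all items in $B$ fall in the approval set $S_{i,k}$ for every agent --- that is, every agent reports that every item in $B$ has utility in $[\tau_k, \tau_{k-1})$ --- while the remaining $n - \delta$ items are distributed among the lower approval sets so as to respect the unit-sum constraint. The point is that the input profile reveals nothing about how to match the agents to the items in $B$ beyond the coarse bracket $[\tau_k, \tau_{k-1})$.

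Next I would exhibit two (or a small number of) consistent completions of this input profile. In the ``hard'' completion, for a target matching that restricts to a perfect matching between the agents and the items of $B$, each matched pair actually has utility close to $\tau_{k-1}$ (the top of the bracket), and all other matched pairs have utility close to $\tau_k$ (the bottom). Because $\tau_{k-1}/\tau_k = \delta$ and there are $|B| = \Theta(\delta)$ such ``heavy'' pairs, the optimal social welfare is $\Omega(\delta \cdot \tau_{k-1})$ whereas a generic matching (one that does not align with the hidden heavy pairs) gets only $O(\delta \cdot \tau_k) = O(\tau_{k-1})$ from the block plus a negligible contribution from outside; hence its distortion on this completion is $\Omega(\delta)$. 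To make this work for \emph{any} mechanism $f$, I would instead use the symmetry of the construction: the input profile is invariant under relabelling the agents (or the items in $B$), so by an averaging argument over all consistent completions obtained by permuting which pairs are heavy, the mechanism's output must, on at least one completion, capture only an $O(1/|B|)$ fraction of the heavy mass --- again yielding distortion $\Omega(\delta)$. For deterministic $f$ this is immediate from pigeonhole on the single output matching; for randomized $f$ this is exactly where Lemma~\ref{lem:sum-probabilities} comes in: applied with $M = B$, it guarantees that the expected number of heavy pairs the mechanism gets right is at most $1$, while the optimum gets all $\Theta(\delta)$ of them.

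The main obstacle I anticipate is the bookkeeping around the unit-sum normalization: I need the items outside $B$ to absorb the remaining $1 - |B| \cdot (\text{value in } [\tau_k,\tau_{k-1}))$ of each agent's utility budget while simultaneously landing in genuine approval brackets $S_{i,k'}$ for $k' > k$ (or $k' < k$) that are consistent with the declared thresholds, and I need to verify this is feasible for the given $\tauv$ --- in particular that $|B| \cdot \tau_k$ does not already exceed $1$, which is why $|B|$ is taken to be $\Theta(\delta)$ rather than larger, and which may force me to treat separately the degenerate case where $\delta$ is already $\Omega(n)$ (there I can fall back to two items carrying almost all the mass). A secondary subtlety is ensuring the ``negligible contribution from outside $B$'' is genuinely negligible relative to $\delta \cdot \tau_{k-1}$; I would handle this by making the heavy pairs dominate, e.g. setting the off-block utilities to be spread thinly over many items so each contributes $o(\tau_{k-1}/\delta)$. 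Once these feasibility checks are in place, the distortion lower bound $\Omega(\delta)$ follows from the averaging/pigeonhole step above, uniformly over deterministic and randomized mechanisms.
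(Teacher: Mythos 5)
Your approach is essentially the paper's: same family of instances (a block $B$ of items that every agent reports in bracket $k$, with $|B| = \Theta(\delta)$), same use of Lemma~\ref{lem:sum-probabilities} with $M = B$ to pick the ``heavy'' matching adversarially against the mechanism's probability table, and the same accounting that the mechanism collects at most one heavy pair in expectation while the optimum collects $\Theta(\delta)$ of them. So the high-level route is identical, and you have correctly identified the one place randomization must be handled (via Lemma~\ref{lem:sum-probabilities} rather than raw pigeonhole).

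The bookkeeping you flag but leave unresolved is, however, the genuinely delicate part, and your sketch as written does not yet close it. The input profile $\mathbf{S}$ must be fixed \emph{before} the adversary chooses which pairs are heavy (the mechanism's probabilities $p(i,a)$ only become visible once $\mathbf{S}$ is fixed). If only the $|B|$ agents hit by the heavy matching $\sigma$ carry a heavy item and the other $n - |B|$ agents have all of $B$ at $\approx \tau_k$, then the two groups have different residual mass $1 - \sum_{a \in B} u_i(a)$ to park outside $B$, so they do not in general have the same approval sets --- yet which agents fall into which group is chosen \emph{after} $\mathbf{S}$. The symmetry/relabelling argument you invoke does not repair this, because the relabelling changes where each agent's residual lands. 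The paper sidesteps this by making \emph{every} agent's utility vector over $B$ a permutation of the same multiset: each agent has exactly one item at $\tau_{k-1}/2$ (either the item they get under the min-probability matching $A_{\itms_1}$, or a fixed fallback item $a^*$ for agents outside $\ags_1$) and the remaining $m$ items at $\tau_k^+$. Consequently the residual is the same scalar for all agents and can be placed on a single item in a fixed lower bracket. This also explains the choice of heavy value $\tau_{k-1}/2$ rather than ``close to $\tau_{k-1}$'': with the paper's choice of $m$ via $\tau_{k-1}/2 + m\tau_k^+ \leq 1 < \tau_{k-1}/2 + (m+1)\tau_k^+$, the residual is guaranteed to be strictly below $\tau_k$, and $m+1 > \delta/2$ still holds, so no separate degenerate case is needed. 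Your plan is fixable along exactly these lines, but as stated it would not survive the consistency check that $\mathbf{S}$ must be independent of $\sigma$.
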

\begin{proof}
For constant $\delta$ the bound is trivial, so consider $\delta > 2$. For an infinitesimal $\varepsilon$, define $\tau_k^+ := \tau_k + \varepsilon$, and find integer $m$ such that $\tau_{k-1}/2 + m\tau_k^+ \leq 1 < \tau_{k-1}/2 + (m+1)\tau_k^+$. We partition the items into three sets: $\itms_1$ with $m+1$ items, $\itms_2$ with $1$ item, and $\itms_3$ with $n-m-2$ items. 
Note that since $\varepsilon$ is an infinitesimal, we have 
\begin{equation}
    \label{eq:m_bound}
    \tau_{k-1}/2 + (m+1)\tau_k^+  > 1 \implies (m+1)\tau_k^+ > 1/2 \implies m+1 > \frac{\delta}{2}.
\end{equation}

Now consider an instance with input profile such that 
\begin{itemize}
    \item $S_{i,k} = \itms_1$,
    \item $S_{i,k'} = \itms_2$ for some $k' > k$, and
    \item $S_{i,j} = \emptyset$ for $j \in [t] \setminus \set{k, k'}$
\end{itemize}
We define the following consistent utility profile (that is, the utilities of each agent sum up to $1$ and the utilities induce the aforementioned input profile). 
Let $A_{\itms_1}$ be the matching of the items in $\itms_1$ to the agents with minimum sum of probabilities, and let $\ags_1$ be the set of agents that are assigned an item from $\itms_1$ according to the matching $A_{\itms_1}$, that is, $\ags_1:= \set{i \in \ags: \exists a \in \itms_1 , A_{\itms_1}(a) = i}$.
For fixed item $a^* \in \itms_1$, we define the utility function
$$u_i(a) = \begin{cases}
    \tau_{k-1}/2 & a \in \itms_1, A_{\itms_1}(a) = i \\
    \tau_{k-1}/2 & a = a^*, i \notin \ags_1 \\
    1-\tau_{k-1}/2-m\tau_k^+ & a \in \itms_2 \\
    0 & a \in \itms_3\\
    \tau_{k}^+ & o.w.
\end{cases}$$
Observe that, for any agent $i$, the utilities sum up to $1$. For any agent $i$ there is exactly one item in $\itms_1$ for which $i$ has value $\tau_{k-1}/2$ (either $i \in \ags_1$ and thus there is a single item $a \in \itms_1$ such that $A_{\itms_1}(a)=i$ for which $i$ has value $\tau_{k-1}/2$, or $i \not\in \ags_1$ and thus $i$ has value $\tau_{k-1}/2$ for $a^*$), and $|\itms_1|-1 = m$ items for which $i$ has value $\tau_k^+$. Consequently, 
\begin{align*}
\sum_{a \in \itms} u_i(a) = \frac{\tau_{k-1}}{2} + m \tau_k^+ + 1- \frac{\tau_{k-1}}{2}-m\tau_k^+ = 1.
\end{align*}
Also, observe that the utility function is consistent to the input profile: 
for any item $a \in M_1$, the utility of agent $i$ is either $\tau_{k-1}/2$ or $\tau_k^+$, i.e., in the interval $[\tau_{k-1},\tau_k)$;
for the single item $a \in M_2$, by the choice of $m$, the utility of agent $i$ is 
$$1-\frac{\tau_{k-1}}{2}-m\tau_k^+ <  \frac{\tau_{k-1}}{2}+(m+1)\tau_k^+ -\frac{\tau_{k-1}}{2}-m\tau_k^+ = \tau_k^+ = \tau_k + \varepsilon.$$ Since the inequality is strictly, the utility is at most $\tau_k$ for infinitesimal $\varepsilon$.

Now consider the expected social welfare of $f(S)$ when $\varepsilon \to 0$ according to the above utility profile. The maximum achieved utility is $\tau_{k-1}/2$ from $a^*$, 
$$p(A_{\itms_1}(a), a)\cdot \tau_{k-1}/2 +  (1-p(A_{\itms_1}(a), a))\cdot\tau_k$$ 
from each item $a\in \itms_1 \setminus \{a^*\}$, and at most $\tau_k$ from the item in $\itms_2$. 
By Lemma~\ref{lem:sum-probabilities}, we have that $\sum_{a \in \itms_1} p(A_{\itms_1}(a), a) \leq 1$, and thus
\begin{align*}
\E_{\mtch \sim f(S)}[\sw(\mtch, \utp)] 
&\leq \frac{\tau_{k-1}}{2} + \sum_{a \in \itms_1} \bigg( p(A_{\itms_1}(a), a) \cdot \left(\frac{\tau_{k-1}}{2}-\tau_k \right) + \tau_k \bigg) + \tau_k \\
&\leq \tau_{k-1} + (m+1) \tau_k.
\end{align*}
In the optimal matching, we can assign each item $a \in \itms_1$ to agent $A_{\itms_1}(a)$ for a social welfare of $(m+1) \tau_{k-1}/2$. Due to \eqref{eq:m_bound}, the distortion is
\begin{align*}
\frac12 \cdot \frac{(m+1)\tau_{k-1}}{\tau_{k-1} + (m+1)\tau_k} 
&\geq \min\left(\frac{m+1}{2}, \frac{\tau_{k-1}}{2\tau_k}\right) \in \Omega(\delta).
\end{align*}
and the proof is complete. 
\end{proof}

Our next two lemmas provide lower bounds for deterministic and randomized mechanisms, respectively, for when the last threshold level is sufficiently small. 

\begin{lemma}
\label{lem:deterministic:last_threshold}
Consider a threshold vector $\tauv = (\tau_1, \ldots, \tau_t)$ such that $\tau_t \geq 1/(n-1)$. Then, the distortion of any deterministic matching mechanism $f$ that uses $\tauv$ is unbounded.
\end{lemma}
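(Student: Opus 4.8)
The plan is to exploit that when $\tau_t \geq 1/(n-1)$ there is a utility profile in which every agent spreads her utility thinly enough that \emph{no} item lands in any threshold set, so the mechanism receives no information at all, and then to pick the remaining details adversarially against whatever the mechanism outputs.

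First I would consider the ``uninformative'' input profile $\mathbf{S}$ in which $S_{i,k} = \emptyset$ for every agent $i\in\ags$ and every $k\in[t]$. Since $f$ is deterministic, $f(\mathbf{S})$ is a single fixed matching $A$. As $n \geq 2$ (for $n=1$, or in degenerate threshold regimes, the statement is vacuous), there is a matching $\sigma : \ags \to \itms$ with $\sigma(i) \neq A(i)$ for every $i$ (equivalently, $A^{-1}\circ\sigma$ is a derangement, which exists).

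Next I would define a utility profile: set $u_i(A(i)) = 0$ and $u_i(j) = 1/(n-1)$ for every item $j \neq A(i)$. This is unit-sum. The one place the hypothesis enters is the consistency check: every utility is either $0$ or $1/(n-1) \leq \tau_t$, so no item of any agent exceeds $\tau_t$, hence this profile induces exactly the all-empty input profile $\mathbf{S}$, i.e.\ $\utp \rhd \mathbf{S}$. (If one prefers to avoid a zero denominator below, replace $0$ by an infinitesimal $\varepsilon>0$ and $1/(n-1)$ by $(1-\varepsilon)/(n-1) \le 1/(n-1)\le \tau_t$; consistency still holds.)

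Finally I would compare welfares: $\sw(A,\utp) = \sum_{i\in\ags} u_i(A(i)) = 0$, whereas $\sw(\opt,\utp) \geq \sw(\sigma,\utp) = n/(n-1) > 0$ since $\sigma(i)\neq A(i)$ for all $i$. Hence $\dist(f) \geq \dist(f(\mathbf{S}),\mathbf{S}) = \max_{\utp' \rhd \mathbf{S}} \sw(\opt,\utp')/\sw(f(\mathbf{S}),\utp') = \infty$ (or, with the $\varepsilon$-perturbed profile, at least $(1-\varepsilon)/((n-1)\varepsilon)\to\infty$ as $\varepsilon\to 0$), so the distortion is unbounded. The argument is short; the only point requiring care is the consistency verification---confirming the adversarial profile genuinely yields the empty input profile, which is precisely where $\tau_t \geq 1/(n-1)$ is used---together with the standard existence of a derangement-type permutation relative to $A$; the rest is a one-line computation.
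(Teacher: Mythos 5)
Your proof is correct and follows essentially the same strategy as the paper's: force the all-empty input profile $\mathbf{S}$, note that $f(\mathbf{S})$ is a fixed matching $A$, and exhibit a consistent utility profile giving $A$ zero welfare while some derangement-type matching gets positive welfare. The only difference is a minor simplification in the adversarial utilities---you give each agent $0$ on $A(i)$ and $1/(n-1)$ uniformly elsewhere, while the paper singles out a second matching $B$ to receive utility $\tau_t$ and spreads $(1-\tau_t)/(n-2)$ over the remaining $n-2$ items. Your version is a touch cleaner (it avoids an $n-2$ in the denominator and so handles $n=2$ gracefully), but the consistency check, the use of the hypothesis $\tau_t \geq 1/(n-1)$, and the concluding welfare comparison are the same.
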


\begin{proof}
Consider the input profile $\mathbf{S}$ where the threshold approval sets of any agent are empty, and thus the utility of any agent for any item is at most $\tau_t$. Let $A=f(\mathbf{S})$ be the matching computed by the deterministic matching mechanism $f$, and let $B$ be another matching such that $A(a) \neq B(a)$ for every item $a \in \itms$. Consider the utility profile $\utp$ where agents have utility $0$ for their matched item in $A$, utility $\tau_t$ for their matched item in $B$, and $(1-\tau_t)/(n-2)$ for each of the remaining $n-2$ items. Note that $\tau_t \ge 1/(n-1) \implies (1-\tau_t)/(n-2) \leq \tau_t$, and hence $\utp \rhd \mathbf{S}$. 
Since $\sw(A,\utp) = 0$ and $\sw(B,\utp) = n \cdot \tau_t > 0$, the distortion is unbounded. 
\end{proof}

\begin{lemma}
\label{lem:randomized:last_threshold}
Consider a threshold vector $\tauv = (\tau_1, \ldots, \tau_t)$ such that $\tau_t > 1/n$.
Then, the distortion of any randomized matching mechanism $f$ that uses $\tauv$ is $\Omega(n\cdot \tau_t)$.
\end{lemma}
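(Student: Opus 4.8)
The structure should mirror Lemma~\ref{lem:deterministic:last_threshold}: start from the all-empty input profile $\mathbf{S}$ (so every consistent utility is at most $\tau_t$ on every item), and exhibit a consistent utility profile on which the mechanism $f$ is forced to lose an $\Omega(n\cdot\tau_t)$ factor. The key difference from the deterministic case is that a randomized mechanism spreads probability mass around, so instead of picking a single ``bad'' matching $B$ disjoint from $f(\mathbf{S})$, I would identify a set of items that collectively receive little probability mass from the agents we care about, and concentrate utility there.

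\textbf{Construction.} Since $\tau_t > 1/n$, pick an integer $m$ with $1/\tau_t \le m < n$ (so a utility function can put value $\tau_t$ on exactly $m$ items and $0$ elsewhere, minus a tiny correction to make it sum to $1$; alternatively put $\tau_t$ on $m-1$ items and the residual $1-(m-1)\tau_t \le \tau_t$ on one more). Let $M$ be a set of $m$ items, chosen so that $A_M$ — the matching of $M$ to agents with minimum sum of probabilities under $f$ — has $\sum_{a\in M} p(A_M(a),a)\le 1$ by Lemma~\ref{lem:sum-probabilities}. Let $\ags_M=\{A_M(a):a\in M\}$. Define the utility profile: each agent $i\in\ags_M$ gives utility $\tau_t$ to each item of $M$ except, on its own matched item $A_M(i)$... — more carefully, I want each agent in $\ags_M$ to have value $\tau_t$ on \emph{all} of $M$ (with a negligible correction), and value $0$ or the tiny residual on the rest; agents outside $\ags_M$ can have an arbitrary consistent utility, e.g.\ uniform-ish with all values $\le \tau_t$. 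The point is that the optimal matching assigns each $a\in M$ to $A_M(a)\in\ags_M$, giving social welfare $\ge (m-1)\tau_t = \Omega(n\tau_t\cdot\tau_t)$...

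Let me recalibrate to get exactly $\Omega(n\tau_t)$. The right target: $\mathsf{sw}(\opt,\utp)\ge c$ for some absolute constant $c>0$ (achieved by matching each $a\in M$ to $A_M(a)$, each contributing $\tau_t$, and $m\tau_t\ge 1$), while $\E_{A\sim f}[\mathsf{sw}(A,\utp)] = O(1/n)$. The latter holds if the only items with positive utility are those in $M$, agents outside $\ags_M$ have utility $0$ on $M$, and $\sum_{a\in M}p(A_M(a),a)\le 1$ forces $\sum_{a\in M}\sum_{i\in\ags_M}p(i,a)\,u_i(a)\le \tau_t\cdot\sum_{a\in M}\sum_{i\in\ags_M}p(i,a)$; but this last sum over \emph{all} of $\ags_M$, not just the minimizing matching, need not be $\le 1$. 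So the construction must be more surgical: give agent $A_M(a)$ positive utility \emph{only} on item $a$ (value $\tau_t$) and distribute the remaining $1-\tau_t$ over items in $\itms_3:=\itms\setminus M$ with value $(1-\tau_t)/|\itms_3|\le\tau_t$ (using $m<n$, i.e.\ $\tau_t>1/n$, to keep consistency). Then $\E[\mathsf{sw}]\le \tau_t\sum_{a\in M}p(A_M(a),a) + (1-\tau_t) \le \tau_t + 1$, which is $O(1)$, not $O(1/n)$ — wrong direction again.

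\textbf{The fix and the real obstacle.} The residual utility must be placed where the mechanism \emph{cannot} help: items in $\itms_3$ should be ones every relevant agent is (nearly) never matched to. But $f$ could match $\ags_M$-agents to $\itms_3$-items with probability $1$. The clean resolution: scale so $m$ is as large as possible, $m = n-1$ (legal since $\tau_t>1/n$ means $(n-1)\tau_t>1-\tau_t$, hmm need $(n-1)\tau_t\le 1$)... Actually the intended argument is surely: take $M=\itms$ (all $n$ items), let $A^*$ be the full matching minimizing $\sum_a p(A^*(a),a)\le 1$, and give each agent $A^*(a)$ utility $\tau_t$ on $n-1$ specified items (all but the one worth $1-(n-1)\tau_t\ge 0$, which needs $\tau_t\le 1/(n-1)$ — contradiction with the hypothesis $\tau_t>1/n$ only in a narrow band). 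I would instead give agent $A^*(a)$ utility $0$ on $a$ itself and value $1/(n-1)\le\tau_t$ (using $\tau_t>1/n$, so $1/(n-1) < n/(n-1)\tau_t$; need $1/(n-1)\le\tau_t$, true when $\tau_t\ge 1/(n-1)$, and for $1/n<\tau_t<1/(n-1)$ handle separately or note $n/(n-1)$ factors are absorbed) on every other item. Then $\mathsf{sw}(\opt)\ge$ (assign $a$ to anyone other than $A^*(a)$) $\ge (n-1)/(n-1)=1$... wait that gives each item value $1/(n-1)$ from almost every agent, so $\opt$ gets $\ge n\cdot\frac{1}{n-1}\cdot\frac{n-1}{n}$, roughly $1$. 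And $\E[\mathsf{sw}(f)] = \sum_a (1-p(A^*(a),a))\cdot\frac{1}{n-1} = \frac{1}{n-1}(n - \sum_a p(A^*(a),a)) \le \frac{n}{n-1}$; that's $O(1)$, so distortion $O(1)$ — still wrong!

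The correct instance must make $\mathsf{sw}(\opt)=\Omega(n\tau_t)$ \emph{and} $\mathsf{sw}(f)=O(1)$, which forces many agents to share a \emph{common} set of high-value items so that no matching can satisfy more than $O(1)$ worth of them. Concretely: one designated item $a^\star$, and agents value $a^\star$ at $\tau_t$, with the bad matching $A^*$ (minimizing $\sum p(A^*(a),a)$) steering probability away; but a single item contributes at most $\tau_t=O(1)$ to everything, giving distortion $O(1)$. So we need $\Theta(1/\tau_t)$ disjoint such items, each a ``focal point'' for a disjoint block of $\Theta(\text{something})$ agents who all value that item at $\tau_t$ and everything else near $0$. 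Partition $\itms$ into $g=\lfloor 1/\tau_t\rfloor$ groups... \emph{this} combinatorial design — how to choose the focal items and agent blocks so that (i) each agent's utility is consistent (sums to $1$, all values $\le\tau_t$), (ii) $\opt$ collects $\tau_t$ from each of the $\sim n$ agents via a perfect matching (so $\opt\approx n\tau_t$), and (iii) $f$, by Lemma~\ref{lem:sum-probabilities} applied to the set of focal items or to each block, can route only $O(1)$ total expected utility — is the crux of the proof, and getting the counting to yield exactly the $\Omega(n\tau_t)$ bound (rather than $\Omega(n\tau_t^2)$ or $\Omega(1)$) is the step I expect to require the most care. I would set it up so that Lemma~\ref{lem:sum-probabilities} is invoked on the collection of $n$ ``self'' items under the probability-minimizing bijection $A^*$, deduce $\sum_a p(A^*(a),a)\le 1$, assign agent $A^*(a)$ value $\tau_t$ on item $a$ and value $0$ on a large ``forbidden'' set tailored to $f$, pushing its residual mass onto the remaining items at rate $\le\tau_t$ — and the accounting then gives $\opt=\Omega(n\tau_t)$ from the optimal self-matching being avoidable only at total probability cost $\le 1$, against $\E[\mathsf{sw}(f)]\le \tau_t\cdot 1 + (\text{residual on forbidden-complement}) = O(1)$, yielding distortion $\Omega(n\tau_t)$. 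Making ``forbidden set'' precise is the obstacle.
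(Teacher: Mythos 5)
Your proposal correctly identifies \Cref{lem:sum-probabilities} as the key tool and circles close to the paper's construction, but it never lands on a complete argument and you explicitly defer the crux (``Making `forbidden set' precise is the obstacle''). The actual construction is simpler than you anticipate, and the gap comes from two specific miscalibrations.

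First, the target you set early on is wrong. You aim for $\sw(\opt)=\Omega(1)$ together with $\E[\sw(f)]=O(1/n)$. The right target is $\sw(\opt)=\Omega(n\tau_t)$ together with $\E[\sw(f)]=O(1)$; since $\tau_t$ can be as small as $1/n$, the quantity $n\tau_t$ can itself be $\Theta(1)$, so there is no need to drive the mechanism's expected welfare down to $1/n$. In fact you compute $\E[\sw] \le \tau_t + 1 = O(1)$ in one of your intermediate attempts and then wrongly dismiss it as ``wrong direction again.'' With $M$ of size only $\Theta(1/\tau_t)$, though, your $\opt$ was only $\Theta(1)$, so that specific attempt still only gave distortion $\Omega(1)$; the fix is to take $M = \itms$.

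Second, when you do consider $M=\itms$ you invert the utility placement. You try giving agent $A^*(a)$ utility $\tau_t$ on $n-1$ items, and in another variant utility $0$ on item $a$ itself --- both backwards. The correct profile is: let $A_\itms$ be the full matching over $\itms$ minimizing $\sum_a p(A_\itms(a),a)\le 1$; give agent $A_\itms(a)$ utility $\tau_t$ on item $a$ and utility $(1-\tau_t)/(n-1)$ on each of the other $n-1$ items. Consistency holds because $(1-\tau_t)/(n-1)\le\tau_t$ iff $\tau_t\ge 1/n$, and unit-sum is exact. The point is that $A_\itms$ is the matching the mechanism is \emph{least} likely to output, so the adversary benefits from making $A_\itms$ the welfare-maximizer: $\sw(\opt)\ge n\tau_t$. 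On the mechanism side there is no need for a ``forbidden set''; just spread the residual uniformly and compute
\begin{align*}
\E[\sw(f)] &= \sum_{a\in\itms}\Big[p(A_\itms(a),a)\,\tau_t + \big(1-p(A_\itms(a),a)\big)\tfrac{1-\tau_t}{n-1}\Big] \\
&= \Big(\tau_t-\tfrac{1-\tau_t}{n-1}\Big)\sum_{a}p(A_\itms(a),a) + n\cdot\tfrac{1-\tau_t}{n-1}\\
&\le \Big(\tau_t-\tfrac{1-\tau_t}{n-1}\Big)\cdot 1 + n\cdot\tfrac{1-\tau_t}{n-1}\ =\ 1,
\end{align*}
where the inequality uses both $\sum_a p(A_\itms(a),a)\le 1$ and the sign condition $\tau_t\ge 1/n$. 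This gives distortion at least $n\tau_t$ directly; the residual mass the mechanism collects is automatically bounded and poses no obstacle.
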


\begin{proof}
Consider the input profile $\mathbf{S}$ where the threshold approval sets of any agent are empty, and thus the utility of any agent for any item is at most $\tau_t$. 
Let $A_\itms$ be the matching over $\itms$ with minimum sum of probabilities; by Lemma~\ref{lem:sum-probabilities},  $\sum_{a \in \itms} p(A_\itms(a), a) \leq 1$. 
Now, consider the utility profile $\utp$ where each agent has utility $\tau_t$ for the item she is matched to according to $A_{\itms}$ and utility $(1-\tau_t)/(n-1)$ for each of the remaining items. Note that $\tau_t \ge 1/n \implies (1-\tau_t)/(n-1) \leq \tau_t$, and hence $\utp \rhd \mathbf{S}$. The expected social welfare of the mechanism is
\begin{align*}
\E_{A \sim f(S)}\left[\sum_{a \in \itms} u_{A(a)}(a)\right]
=& \sum_{a \in \itms} \bigg( p(A_\itms(a), a) \cdot \tau_t + \\&(1- p(A_\itms(a), a) \cdot \frac{1-\tau_t}{n-1}  \bigg) \\
= \left( \tau_t - \frac{1-\tau_t}{n-1} \right) \sum_{a \in \itms}& p(A_\itms(a), a)  + n \cdot \frac{(1-\tau_t)}{n-1} \\
\leq \tau_t - \frac{1-\tau_t}{n-1} + n \cdot &\frac{(1-\tau_t)}{n-1} = 1.
\end{align*}
Since $\sw(A_{\itms},\utp) = n\cdot \tau_t$, the distortion is at least this much. 
\end{proof}

By appropriately combining Lemmas~\ref{lem:threshold_gap}, \ref{lem:deterministic:last_threshold}, and \ref{lem:randomized:last_threshold}, we can establish the desired lower bounds on the distortion of the different types of mechanisms. 

\begin{theorem}
The distortion of any deterministic matching mechanism $f$ that uses a threshold vector $\tauv$ of length $t$ is $\Omega(\sqrt[t]{n})$. 
\end{theorem}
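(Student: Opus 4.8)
The plan is to derive the deterministic lower bound of $\Omega(\sqrt[t]{n})$ by a case analysis on the threshold vector $\tauv = (\tau_1, \ldots, \tau_t)$ that any mechanism $f$ must commit to, applying whichever of the two already-established deterministic lower bounds (Lemmas~\ref{lem:threshold_gap} and~\ref{lem:deterministic:last_threshold}) is triggered. The key observation is that the thresholds $\tau_0 = 1 \geq \tau_1 > \cdots > \tau_t$ telescope: writing $\delta_k = \tau_{k-1}/\tau_k$ for $k \in [t]$, we have $\prod_{k=1}^{t} \delta_k = \tau_0/\tau_t = 1/\tau_t$. Hence $\max_k \delta_k \geq (1/\tau_t)^{1/t}$, so the largest multiplicative gap $\delta$ between consecutive thresholds is at least $\tau_t^{-1/t}$.

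First I would split on the size of the last threshold $\tau_t$. If $\tau_t \geq 1/(n-1)$, then Lemma~\ref{lem:deterministic:last_threshold} applies directly and the distortion of $f$ is already unbounded, which is certainly $\Omega(\sqrt[t]{n})$. Otherwise $\tau_t < 1/(n-1)$, so $1/\tau_t > n-1$, and then by the telescoping bound the largest gap satisfies $\delta \geq (1/\tau_t)^{1/t} > (n-1)^{1/t}$. Now Lemma~\ref{lem:threshold_gap} gives that the distortion of $f$ is $\Omega(\delta) = \Omega((n-1)^{1/t}) = \Omega(\sqrt[t]{n})$. Since one of the two cases always holds, this completes the argument.

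The only mildly delicate point is making sure the asymptotics are handled uniformly in $t$ — but since the bound we want, $\sqrt[t]{n}$, is exactly $n^{1/t}$ and the gap bound we get is $(n-1)^{1/t} = \Theta(n^{1/t})$ for fixed $t$ (and indeed $(n-1)^{1/t} \geq n^{1/t}/2^{1/t} \geq n^{1/t}/2$ once $n \geq 2$, so the constant is even uniform in $t$), there is no real obstacle here; the two lemmas do all the heavy lifting and this theorem is essentially their bookkeeping corollary. I would present it as a short proof: state the dichotomy on $\tau_t$, invoke Lemma~\ref{lem:deterministic:last_threshold} in the first case, and in the second case combine the telescoping identity $\prod_{k=1}^t \tau_{k-1}/\tau_k = 1/\tau_t > n-1$ with the pigeonhole conclusion $\delta = \max_k \tau_{k-1}/\tau_k \geq (n-1)^{1/t}$ and then Lemma~\ref{lem:threshold_gap}.
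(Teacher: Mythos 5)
Your proposal matches the paper's proof: the same dichotomy on $\tau_t$ with respect to $1/(n-1)$, invoking Lemma~\ref{lem:deterministic:last_threshold} in one branch and the telescoping/pigeonhole bound $\delta \geq \tau_t^{-1/t}$ with Lemma~\ref{lem:threshold_gap} in the other. (Incidentally, you write the exponent correctly as $\tau_t^{-1/t}$; the paper's proof has a typo reading $\tau_t^{-t}$.)
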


\begin{proof}
If $\tau_t \geq 1/(n-1)$, by Lemma~\ref{lem:deterministic:last_threshold}, the distortion is unbounded.
Otherwise, if $\tau_t \leq 1/(n-1)$, then there exists $k \in [t]$ such that $\delta \geq \tau_{k-1}/\tau_k \geq \tau_t^{-t} \geq \sqrt[t]{n}$, and thus, by Lemma~\ref{lem:threshold_gap}, the distortion is $\Omega(\sqrt[t]{n}).$
\end{proof}

\begin{theorem}
The distortion of any matching mechanism $f$ that uses a threshold vector $\tauv$ of length $t$ is $\Omega(\sqrt[t+1]{n})$. 
\end{theorem}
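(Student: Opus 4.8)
The plan is to mirror the deterministic case but split into the same two regimes, using Lemma~\ref{lem:randomized:last_threshold} in place of Lemma~\ref{lem:deterministic:last_threshold}. Set a cutoff value, say $c := \sqrt[t+1]{n}$, and consider a randomized mechanism $f$ with threshold vector $\tauv$ of length $t$. Case~1: if the last threshold satisfies $\tau_t > c/n = n^{-t/(t+1)}$, then Lemma~\ref{lem:randomized:last_threshold} immediately gives distortion $\Omega(n \cdot \tau_t) = \Omega(c) = \Omega(\sqrt[t+1]{n})$. Case~2: if $\tau_t \le n^{-t/(t+1)}$, then by a pigeonhole/telescoping argument on the $t+1$ ratios $\tau_0/\tau_1, \tau_1/\tau_2, \ldots, \tau_{t-1}/\tau_t$ (with $\tau_0 = 1$), whose product is $1/\tau_t \ge n^{t/(t+1)}$, the largest such ratio $\delta$ must satisfy $\delta \ge (1/\tau_t)^{1/t} \ge n^{1/(t+1)} = \sqrt[t+1]{n}$; Lemma~\ref{lem:threshold_gap} then gives distortion $\Omega(\delta) = \Omega(\sqrt[t+1]{n})$.

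The one subtlety to get right is the exponent bookkeeping: there are $t$ gaps among the thresholds themselves but $t+1$ gaps once we include $\tau_0 = 1$, and it is crucial that we are allowed to count the gap $\tau_0/\tau_1$ since Lemma~\ref{lem:threshold_gap} is stated with $\tau_0 = 1$ included. Taking the $t$-th root of the product bound (rather than the $(t+1)$-th root used implicitly above — I should double-check which is tighter) is what converts the $n^{t/(t+1)}$ lower bound on $\tau_t^{-1}$ into an $n^{1/(t+1)}$ lower bound on $\delta$. Concretely: if all $t+1$ gaps were at most $\delta$, their product is at most $\delta^{t+1}$, so $\delta^{t+1} \ge \tau_t^{-1} \ge n^{t/(t+1)}$, giving $\delta \ge n^{t/(t+1)^2}$, which is weaker than desired. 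The right move is instead to bound $\tau_t^{-1}$ as a product of only the $t$ gaps $\tau_0/\tau_1, \dots, \tau_{t-1}/\tau_t$, or equivalently to choose the cutoff so that the arithmetic works out; I would calibrate the cutoff on $\tau_t$ to exactly $n^{-t/(t+1)}$ so that $\delta \ge (n^{t/(t+1)})^{1/t} = n^{1/(t+1)}$ when treating $\tau_t^{-1} = \prod_{k=1}^{t}(\tau_{k-1}/\tau_k)$ as a product of $t$ terms.

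I expect the main obstacle to be purely this calibration of the cutoff and the exponent arithmetic — making sure the two cases overlap (i.e., cover all possible values of $\tau_t$) and that both deliver the same $\Omega(\sqrt[t+1]{n})$ bound with the chosen cutoff. Once the cutoff $n^{-t/(t+1)}$ is fixed, Case~1 is a direct application of Lemma~\ref{lem:randomized:last_threshold} (noting $n^{-t/(t+1)} \ge 1/n$ for $n \ge 1$, so the hypothesis $\tau_t > 1/n$ of that lemma is satisfied whenever $\tau_t$ exceeds the cutoff), and Case~2 is the pigeonhole step plus Lemma~\ref{lem:threshold_gap}. No new constructions are needed; all the heavy lifting (the probability-sum bound of Lemma~\ref{lem:sum-probabilities} and the two regime lemmas) is already in hand.

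\begin{proof}
Let $f$ be a randomized matching mechanism using a threshold vector $\tauv = (\tau_1, \dots, \tau_t)$, and set $\tau_0 = 1$. Consider the threshold $\tau_t$.

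\emph{Case 1: $\tau_t > n^{-t/(t+1)}$.} Since $n^{-t/(t+1)} \geq 1/n$, we have $\tau_t > 1/n$, so Lemma~\ref{lem:randomized:last_threshold} applies and the distortion of $f$ is $\Omega(n \cdot \tau_t) = \Omega\bigl(n \cdot n^{-t/(t+1)}\bigr) = \Omega\bigl(n^{1/(t+1)}\bigr) = \Omega(\sqrt[t+1]{n})$.

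\emph{Case 2: $\tau_t \leq n^{-t/(t+1)}$.} The product of the $t$ consecutive multiplicative gaps telescopes:
\begin{equation*}
\prod_{k=1}^{t} \frac{\tau_{k-1}}{\tau_k} = \frac{\tau_0}{\tau_t} = \frac{1}{\tau_t} \geq n^{t/(t+1)}.
\end{equation*}
Hence the largest gap $\delta = \max_{k \in [t]} \tau_{k-1}/\tau_k$ satisfies $\delta^t \geq n^{t/(t+1)}$, i.e., $\delta \geq n^{1/(t+1)} = \sqrt[t+1]{n}$. By Lemma~\ref{lem:threshold_gap}, the distortion of $f$ is $\Omega(\delta) = \Omega(\sqrt[t+1]{n})$.

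In both cases the distortion is $\Omega(\sqrt[t+1]{n})$, as claimed.
\end{proof}
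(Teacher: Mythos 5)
Your proof is correct and takes essentially the same approach as the paper: the same cutoff $n^{-t/(t+1)}$, the same case split, and the same two lemmas (Lemma~\ref{lem:randomized:last_threshold} for large $\tau_t$, Lemma~\ref{lem:threshold_gap} via the telescoping product of the $t$ gaps for small $\tau_t$). Incidentally, your exponent bookkeeping in Case~2 is cleaner than the paper's own write-up, which contains a typo writing $\delta \geq \tau_t^{-t}$ where it should be $\delta \geq \tau_t^{-1/t}$; your version $\delta^t \geq 1/\tau_t \geq n^{t/(t+1)} \implies \delta \geq n^{1/(t+1)}$ is the correct chain.
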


\begin{proof}
Suppose that the threshold vector $\tauv$ is such that $\tau_t > n^{-t/(t+1)}$. Since $n^{-t/(t+1)} \geq n^{-1}$, by Lemma~\ref{lem:randomized:last_threshold}, the distortion of $f$ is $\Omega(n\cdot \tau_t) = \Omega(\sqrt[t+1]{n})$. 
So, we can now assume that $\tau_t \leq n^{-t/(t+1)}$ and let $k \in \arg\max_{j \in [t]} \tau_{j-1}/\tau_j$ with $\tau_0 = 1$.  
Clearly,  
\begin{align*}
\bigg( \frac{\tau_{k-1}}{\tau_k} \bigg)^t \geq \prod_{j \in [t]} \frac{\tau_{j-1}}{\tau_j} = \frac{1}{\tau_t}
\implies \delta = \frac{\tau_{k-1}}{\tau_k} \geq \tau_t^{-t}.
\end{align*}
By Lemma~\ref{lem:threshold_gap}, the distortion of $f$ is $\Omega(\delta) = \Omega(\tau_t^{-t}) = \Omega(\sqrt[t+1]{n})$. 
\end{proof}

\section{Upper Bounds} \label{sec:upper}
In this section we present asymptotically tight upper bounds for deterministic and randomized matching mechanisms. 
Our deterministic mechanism (described below) computes a maximum-weight matching by assuming that each agent has the minimum possible utility (according to the thresholds) for all the items in the different approval set given as input. 

\begin{definition}
For $\delta > 1$ and $t \in [n]$, consider the threshold vector $\tauv = (\delta^{-1}, \delta^{-2}, \ldots, \delta^{-t})$. 
The \textit{deterministic matching mechanism $f_t$} uses the threshold vector $\tauv$ and, given an input profile $\mathbf{S}$, 
constructs the following weighted bipartite graph $G_\mathbf{S}$: 
There are $2n$ nodes in total, consisting of a node $v_i$ for each agent $i \in \ags$ on the left side and a node $z_a$ for each item $a \in \itms$ on the right side. 
For $i \in \ags$, $k \in [t]$ and $a \in S_{i,k}$, there is an edge from $v_i$ to $z_a$ with weight $w(v_i,z_a)$. 
The mechanism $f_t$ finds the maximum weighted matching in $G_\mathbf{S}$ and, for each matched pair $(v_i, z_a)$, assigns item $a$ to agent $i$. If there are unmatched pairs remaining, $f_t$ completes the allocation arbitrarily. 
\end{definition}
    

\begin{example}
\label{exm:matching}
Let $t=2$ and $\tauv = (\tau_1, \tau_2)$. Suppose that $S_{1,1} = \set{a,c}$, $S_{2,1}=\set{d}, S_{2,2}=\set{c}, S_{3,2}=\set{a, c, d}$, while the remaining approval sets are empty. Mechanism $f_t$ constructs the graph $G_\mathbf{S}$ shown in Figure~\ref{fig:matching}, computes a maximum-weight matching, and then assigns any unmatched items arbitrarily.
\end{example}

\begin{figure}[ht]
    \centering
\begin{tikzpicture}[scale=0.95]
\begin{scope}[every node/.style={inner sep=0,outer sep=1mm,fill, circle, minimum size=2mm},every label/.style={rectangle,fill=none}]
    \node [label={left:$v_1$}] (V1) at (0,0) {};
    \node [label={left:$v_2$}](V2) at (0,-1.5) {};
    \node [label={left:$v_3$}](V3) at (0,-3) {};
    \node [label={left:$v_4$}](V4) at (0,-4.5) {};
    \node [label={right:$z_a$}] (Ua) at (8,0) {};
    \node [label={right:$z_b$}](Ub) at (8,-1.5) {};
    \node [label={right:$z_c$}](Uc) at (8,-3) {};
    \node [label={right:$z_d$}](Ud) at (8,-4.5) {};
\end{scope}

\draw (V1) to node[above,sloped] {$\tau_1$}  (Ua);
\draw (V1) to node[above,sloped,pos=0.3] {$\tau_1$}  (Uc);
\draw (V2) to node[above,sloped,pos=0.75] {$\tau_1$}  (Ud);
\draw (V2) to node[above,sloped] {$\tau_2$}  (Uc);
\draw (V3) to node[below,sloped, pos=0.8] {$\tau_2$}  (Ua);
\draw (V3) to node[above,pos=0.2] {$\tau_2$}  (Uc);
\draw (V3) to node[below,sloped] {$\tau_2$}  (Ud);

\end{tikzpicture}
  
    \caption{The graph $G_\mathbf{S}$ that is used by $f_t$ in Example~\ref{exm:matching}.}
    \label{fig:matching}
\end{figure}
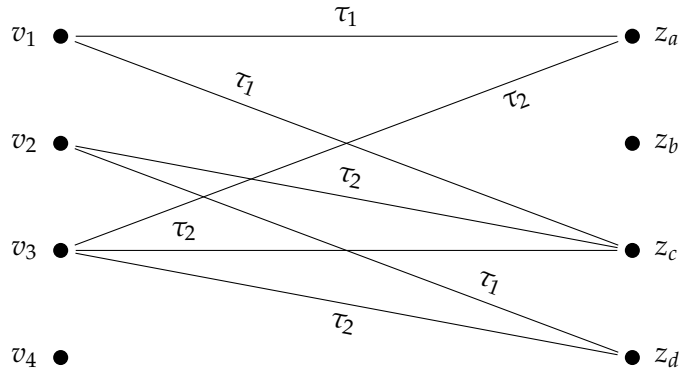

Before we bound the distortion of the mechanism, we prove two very useful technical lemmas. The first one provides us with a lower bound on the weight of the maximum-weight matching in a bipartite graph whose nodes satisfy certain properties; this will be used extensively to lower bound the social welfare of the matching computed by $f_t$, and also by the deterministic mechanism in Section~\ref{sec:extention}.

\begin{restatable}{lemma}{biGraph}
\label{lem:bigraph}
Consider a weighted bipartite graph $G$ with $n$ nodes on the left side $\{v_1, \ldots, v_n\}$ and $m$ nodes $\{z_1, \ldots, z_m\}$ on the right side. If $\sum_{a \in [m]} w(v_i, z_a) \geq W$ for each $v_i$, and $w(v_i, z_a) \geq L$ for each edge $(v_i, z_a)$, then there is matching in $G$ with weight at least $\min\{W, nL\}$.
\end{restatable}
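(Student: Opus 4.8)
The plan is to prove Lemma~\ref{lem:bigraph} by a greedy argument that repeatedly extracts a heavy edge incident to a left node of still-large total weight, matches it, and deletes both endpoints. Concretely, I would run the following process: maintain a set $V' \subseteq \{v_1,\dots,v_n\}$ of unmatched left nodes and $Z' \subseteq \{z_1,\dots,z_m\}$ of unmatched right nodes, initialized to the full sides. As long as $V'$ is nonempty and there is a left node $v_i \in V'$ whose total weight to $Z'$, namely $\sum_{z_a \in Z'} w(v_i,z_a)$, is still at least $W/2$ (or some constant fraction of $W$), pick such a $v_i$, let $z_a \in Z'$ be its heaviest remaining neighbor, add the edge $(v_i,z_a)$ to the matching, and remove $v_i$ from $V'$ and $z_a$ from $Z'$. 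I would then case on why the process stopped.

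The key quantitative step is a \emph{potential/averaging} argument bounding how much weight can be lost when a right node $z_a$ is deleted. Each deletion of $z_a$ removes, from the remaining left nodes, at most $\sum_{v_i \in V'} w(v_i,z_a)$ units of total residual weight. I would bound this by observing that the process only ever matches $v_i$ to its \emph{heaviest} surviving neighbor, so at the moment $v_i$ is matched to $z_a$, every surviving right node $z_b$ satisfies $w(v_i,z_b) \le w(v_i,z_a)$; summing over the at most $m$ right nodes and using $\sum_b w(v_i,z_b) \ge W/2$ (since $v_i$ was chosen) already forces $w(v_i,z_a) \ge W/(2m)$... but that alone is not quite the clean bound I want, so instead I would track the total residual weight $\Phi = \sum_{v_i \in V'} \sum_{z_a \in Z'} w(v_i,z_a)$ and show each step of the matching uses a heavy edge while not destroying $\Phi$ too fast. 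The cleanest route: whenever the process halts because \emph{every} $v_i \in V'$ now has residual weight below $W/2$, all of $\Phi$'s decrease must be attributed to the deleted right nodes, and since originally $\Phi \ge |V'|_{\text{init}} \cdot W$ is not the right framing either — so let me instead bound directly. Each matched edge $(v_i,z_a)$ in my process has $w(v_i,z_a) \ge \frac{1}{m}\cdot\frac{W}{2}$, and more importantly, if the process runs for $n$ iterations we get total matching weight at least... this is exactly where I expect to need care to land on $\min\{W,nL\}$ rather than something weaker like $W/2$ or $nW/(2m)$.

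So the genuine plan is the following sharper greedy. Match greedily by always choosing, among \emph{all} current edges between $V'$ and $Z'$, a \emph{maximum-weight} edge; add it; delete both endpoints. Run until $V'$ is empty or no edges remain. If it runs for all $n$ left nodes, every one of the $n$ chosen edges had weight at least $L$ (every edge has weight $\ge L$), giving matching weight $\ge nL$. If it halts early, say after $s < n$ steps, with $V' \ne \emptyset$ but no edges between $V'$ and $Z'$: then for every surviving $v_i \in V'$, all of its weight $\ge W$ lies on right nodes that have been deleted, i.e.\ on the $s$ matched right nodes $z_{a_1},\dots,z_{a_s}$; summing, $\sum_{\ell=1}^{s} \sum_{v_i \in V'} w(v_i, z_{a_\ell}) \ge |V'| \cdot W \ge W$. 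Now the crucial monotonicity: because we always picked a globally maximum-weight edge, at the step where $z_{a_\ell}$ was matched (to $v_{j_\ell}$), the chosen weight $w(v_{j_\ell},z_{a_\ell})$ was at least $w(v_i,z_{a_\ell})$ for every $v_i$ still alive then — and every $v_i \in V'$ was still alive at every such step (it is never matched). Hence $w(v_{j_\ell}, z_{a_\ell}) \ge \max_{v_i \in V'} w(v_i, z_{a_\ell}) \ge \frac{1}{|V'|}\sum_{v_i\in V'} w(v_i,z_{a_\ell})$. Summing over $\ell$: total matching weight $\ge \sum_\ell w(v_{j_\ell},z_{a_\ell}) \ge \frac{1}{|V'|}\sum_\ell \sum_{v_i \in V'} w(v_i,z_{a_\ell}) \ge \frac{|V'|W}{|V'|} = W$. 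Either way the matching has weight $\ge \min\{W, nL\}$, which is the claim. The main obstacle, as flagged, is getting the averaging bookkeeping in the early-termination case exactly right — making sure the surviving left nodes really were alive at \emph{every} prior step (true, since the process never matches them) so that the global-max choice dominates their weights — and confirming the two cases are exhaustive.
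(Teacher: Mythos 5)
Your third paragraph contains a correct proof, though the first two paragraphs are dead ends that you rightly abandon. The final argument works: greedily pick a globally maximum-weight edge, delete both endpoints, and case on whether the process exhausts the left side. In the early-termination case, any surviving $v_i$ was alive at every step, so each chosen edge weight dominates $w(v_i, z_{a_\ell})$, and $v_i$'s entire weight $\geq W$ concentrates on the deleted right nodes; summing gives $\geq W$.

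Your route differs from the paper's in two ways. First, the greedy itself: the paper iterates over the \emph{right} nodes in arbitrary order and matches each $z_a$ to its heaviest still-available left neighbor, whereas you repeatedly extract the \emph{globally} maximum-weight remaining edge. These produce different matchings in general. Second, your early-termination analysis passes through an averaging step ($\max_{v_i \in V'} w(v_i,z_{a_\ell}) \geq \frac{1}{|V'|}\sum_{v_i \in V'} w(v_i,z_{a_\ell})$), which is valid but unnecessary: you could fix a single surviving $v_i \in V'$ and note directly that $w(v_{j_\ell},z_{a_\ell}) \geq w(v_i,z_{a_\ell})$ for every $\ell$, then sum to get $\geq W$ in one line. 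That is exactly the structure of the paper's argument (with its right-node-by-right-node greedy, the always-available unmatched $v_i$ witnesses $w(v_{B(z_a)},z_a)\ge w(v_i,z_a)$ for all $z_a$). Both approaches buy the same bound with the same asymptotic effort; the paper's version is slightly tighter bookkeeping, yours is a more ``local'' greedy that some readers may find more intuitive. Either is acceptable, but I'd strongly recommend cutting the first two paragraphs of meandering before presenting this, and replacing the averaging step with the single-witness comparison, which also sidesteps the (harmless but distracting) division by $|V'|$.
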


\begin{proof}
We construct a matching $B$ in $G$ as follows: For every $a \in [m]$, we match node $z_a$ to a node $v_i$ such that the edge $(v_i,z_a)$ exists in $G$ and $w(v_i,z_a)$ is the maximum possible among all available (not already matched) nodes in $\{v_1, \ldots, v_n\}$. 
If all nodes in $\{v_1, \ldots, v_n\}$ are matched to a node in $\{z_1, \ldots, z_m\}$, then since the weight of any edge is at least $L$, the total weight of $B$ is at least $nL$. 
Otherwise, if there exists node $v_i$ that is not matched to any node in $\{z_1, \ldots, z_m\}$ according to $B$, then by the definition of $B$,  $w(v_{B(z_a)},z_a) \geq w(v_i,z_a)$ for any $z_a$, and hence the weight of $B$ is 
\begin{align*}
\sum_{a \in [m]} w(v_{B(z_a)},z_a)  \geq \sum_{a \in [m]} w(v_i,z_a) \geq W.
\end{align*}
Consequently, overall, the weight of $B$ is at least $\min\{W, nL\}$.
\end{proof}

The second lemma shows that $G_\mathbf{S}$ admits a matching with weight that is relatively close to the social welfare of the optimal matching allocation for any utility profile consistent to the input profile $\mathbf{S}$. We will use this relation in the analysis of the distortion of $f_t$, as well as in the analysis of our randomized matching mechanism later on. 

\begin{lemma}
\label{lem:max-matching-bound}
Let $\osw$ be the social welfare of the optimal matching allocation. There exists a matching of $G_\mathbf{S}$ with weight at least $\delta^{-1}(\osw - n\tau_t)$. 
\end{lemma}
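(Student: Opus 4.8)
The plan is to start from the optimal matching $\opt$ and convert it into a matching of the bipartite graph $G_\mathbf{S}$ by discarding the edges that are not present in $G_\mathbf{S}$ and suitably accounting for the weight lost. Recall that an edge $(v_i, z_a)$ appears in $G_\mathbf{S}$ exactly when $a \in S_{i,k}$ for some $k \in [t]$, i.e., when $u_i(a) > \tau_t$ (for any $\utp \rhd \mathbf{S}$), in which case its weight $w(v_i,z_a)$ equals the lower threshold $\tau_k$ of the interval containing $u_i(a)$. Since consecutive thresholds differ by a factor of $\delta$, whenever $u_i(a) \in [\tau_{k-1}, \tau_k)$ we have $w(v_i, z_a) = \tau_k > \tau_{k-1}/\delta \geq u_i(a)/\delta$; that is, the graph weight of every present edge is at least a $\delta^{-1}$ fraction of the true utility on that pair.

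Concretely, I would let $\opt$ be the social-welfare-maximizing matching for a fixed but arbitrary $\utp \rhd \mathbf{S}$, and define $B$ to be the restriction of $\opt$ to those matched pairs $(i, \opt_i)$ for which the edge $(v_i, z_{\opt_i})$ exists in $G_\mathbf{S}$; formally, keep the pair iff $u_i(\opt_i) > \tau_t$. This $B$ is a valid (partial) matching in $G_\mathbf{S}$. Its weight is
\[
w(B) = \sum_{i \,:\, u_i(\opt_i) > \tau_t} w(v_i, z_{\opt_i}) \;\geq\; \delta^{-1} \sum_{i \,:\, u_i(\opt_i) > \tau_t} u_i(\opt_i).
\]
For the pairs that were dropped we have $u_i(\opt_i) \leq \tau_t$, and there are at most $n$ of them, so the total utility on dropped pairs is at most $n\tau_t$. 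Hence $\sum_{i \,:\, u_i(\opt_i) > \tau_t} u_i(\opt_i) \geq \osw - n\tau_t$, and combining the two displays gives $w(B) \geq \delta^{-1}(\osw - n\tau_t)$, as claimed.

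The argument is essentially bookkeeping, so there is no serious obstacle; the one point to be careful about is the direction of the inequality $w(v_i,z_a) \geq u_i(a)/\delta$ — it relies on using the \emph{lower} endpoint $\tau_k$ of the interval $[\tau_{k-1},\tau_k)$ as the edge weight and on the geometric spacing $\tau_{k-1}/\tau_k = \delta$, and it must hold for \emph{every} consistent $\utp$ since the edge weights in $G_\mathbf{S}$ are fixed by $\mathbf{S}$ alone. I would also note explicitly that $\osw - n\tau_t$ could be negative, in which case the bound is vacuously true, so no case analysis is needed. One subtlety worth a sentence: the lemma as stated asserts the existence of a matching of $G_\mathbf{S}$ with the stated weight, and $B$ need not be a perfect matching of $G_\mathbf{S}$, but that is fine — ``a matching of $G_\mathbf{S}$'' does not require perfection, and in any case a maximum-weight matching of $G_\mathbf{S}$ weighs at least $w(B)$.
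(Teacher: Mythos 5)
Your proof is correct and follows essentially the same argument as the paper's: restrict $\opt$ to the pairs whose edges survive in $G_\mathbf{S}$, observe that each surviving edge has weight $w(v_i,z_a)=\tau_k = \tau_{k-1}/\delta \geq u_i(a)/\delta$ (you wrote a strict $>$ where it is in fact equality by the geometric spacing, but $\geq$ is all that is needed), and bound the total utility of the dropped pairs by $n\tau_t$. The one small cosmetic point is that the paper's notation $[\tau_{k-1},\tau_k)$ really means $\tau_k < u_i(a) \leq \tau_{k-1}$ since the thresholds decrease; your inequalities are consistent with this, so no gap.
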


\begin{proof}
For any agent $i$, let $a^*_i$ be the item that $i$ is given in the optimal matching allocation. Clearly, either there exists $j \in [t]$ such that $a_i^* \in S_{i,j}$, or $a_i^* \not\in \bigcup_{j \in [t]} S_{i,j}$. The total utility accumulated by the agents of the second type is at most $n\tau_t$. For the agents of the first type, since $a_i^* \in S_{i,j}$, there is an edge between $v_i$ and $z_{a_i^*}$ in $G_\mathbf{S}$ of weight 
$$\tau_j = \delta^{-j} = \delta^{-1}\cdot \delta^{-j+1} = \delta^{-1}\cdot \tau_{j-1} \geq \delta^{-1} \cdot u_i(a_i^*).$$ 
Hence, the intersection of $A^*$ and $G_\mathbf{S}$ gives us a matching of weight at least $\delta^{-1}(\osw - n\tau_t)$. 
\end{proof}

\begin{theorem}
\label{thm:simple_det_upb}
For $t \in [n]$ and $\delta = \sqrt[t]{2n}$, the distortion of the deterministic matching mechanism $f_t$ is $O(\sqrt[t]{n})$.
\end{theorem}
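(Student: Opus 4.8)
The plan is to bound the distortion of $f_t$ on an arbitrary input profile $\mathbf{S}$ by comparing the social welfare of the matching $f_t(\mathbf{S})$ against $\osw$, the optimal social welfare, for an arbitrary consistent utility profile $\utp \rhd \mathbf{S}$. The key is to lower bound $\sw(f_t(\mathbf{S}), \utp)$ in two complementary ways and combine them. First I would observe that $f_t$ outputs a \emph{maximum}-weight matching in $G_\mathbf{S}$, so its weight is at least the weight of the matching guaranteed by Lemma \ref{lem:max-matching-bound}, namely at least $\delta^{-1}(\osw - n\tau_t)$. Since every edge of $G_\mathbf{S}$ incident to $v_i$ has weight equal to the lower threshold $\tau_j \le u_i(a)$ for the corresponding item $a \in S_{i,j}$, the actual utility $f_t$ collects is at least the weight of the matching it computes; hence $\sw(f_t(\mathbf{S}), \utp) \ge \delta^{-1}(\osw - n\tau_t)$.

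Second, I would get a crude but always-valid bound: the optimal social welfare satisfies $\osw \le n$ (each of the $n$ agents has utility at most $1$ for its assigned item, by the unit-sum assumption), and more to the point $\osw \ge 1/n$ is not what we need — rather, we need that $f_t$ never does \emph{too} badly relative to $n\tau_t$. With $\delta = \sqrt[t]{2n}$ we have $\tau_t = \delta^{-t} = 1/(2n)$, so $n\tau_t = 1/2$. Now split into cases according to the size of $\osw$. If $\osw \ge 1$, then $\osw - n\tau_t \ge \osw/2$, so $\sw(f_t(\mathbf{S}), \utp) \ge \delta^{-1} \osw / 2$, giving distortion at most $2\delta = 2\sqrt[t]{2n} = O(\sqrt[t]{n})$. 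If instead $\osw < 1$, I would argue that $f_t$ still collects a constant fraction of $\osw$ via a direct combinatorial argument: even in the worst case, completing the matching arbitrarily assigns every agent \emph{some} item, and because utilities are unit-sum with $n$ items, one can show the arbitrary completion already yields welfare $\Omega(1) \ge \Omega(\osw)$ — in fact, a cleaner route is to note that when $\osw < 1$ the ratio $\osw / \sw$ is trivially $O(1)$ provided $\sw = \Omega(1/n)$, which follows since at least one agent necessarily receives an item worth at least the average $1/n$ under $\utp$ — but this needs a little care since the adversary picks $\utp$ after seeing $f_t(\mathbf{S})$. The robust way: apply Lemma \ref{lem:bigraph} to $G_\mathbf{S}$ with $W = \osw - n\tau_t$ replaced by per-agent sums, together with the fact that $f_t$ takes the max-weight matching, so that $\sw(f_t) \ge \min\{\delta^{-1}(\osw - n\tau_t), \text{something}\}$ is not needed — the single inequality $\sw(f_t(\mathbf{S}),\utp) \ge \delta^{-1}(\osw - n\tau_t)$ together with $n\tau_t = 1/2$ suffices once we also handle $\osw$ close to $1/2$.

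To make the case analysis clean, I would reduce everything to: the distortion on $(\mathbf{S},\utp)$ is $\osw / \sw(f_t(\mathbf{S}),\utp) \le \osw / (\delta^{-1}(\osw - 1/2)) = \delta \cdot \osw/(\osw - 1/2)$, which is $O(\delta)$ whenever $\osw \ge 1$. The remaining regime $\osw < 1$ is where the main obstacle lies, since there $\osw - n\tau_t$ can be tiny or negative and Lemma \ref{lem:max-matching-bound} says nothing useful. Here I would instead bound the distortion directly by noting $\osw \le 1$ always (unit-sum), and lower bounding $\sw(f_t(\mathbf{S}),\utp)$ by the welfare of \emph{any} matching the adversary cannot drive below; the right tool is again Lemma \ref{lem:bigraph} applied with $W$ equal to the total weight $f_t$ \emph{could} achieve and $L = \tau_t$, but since the adversary controls $\utp$, the safe bound is that $f_t$'s matching, evaluated under $\utp$, gives each agent utility $\ge 0$ and gives the agents matched inside $G_\mathbf{S}$ their threshold values; combined with $\osw \le 1$ this yields distortion $O(n) $ only — too weak. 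So the actual fix, which I expect is the intended one, is to choose the utility-profile-independent inequality more carefully: since $f_t$ maximizes weight in $G_\mathbf{S}$ and, by Lemma \ref{lem:max-matching-bound}, some matching there has weight $\ge \delta^{-1}(\osw - n\tau_t)$, and since $\sw(f_t,\utp)$ dominates that weight, we always have $\sw(f_t,\utp) \ge \delta^{-1}(\osw - n\tau_t)$; then distortion $\le \delta\,\osw/(\osw-n\tau_t)$, and one checks this is $O(\delta)$ exactly when $\osw \ge 2 n\tau_t = 1$, while for $\osw < 1$ we use the complementary bound $\sw(f_t,\utp)\ge$ (some agent's utility) — and here I would invoke that the max-weight matching in $G_\mathbf{S}$ is nonempty whenever $\osw > n\tau_t$, so either $\osw \le n\tau_t = 1/2$ and then $\osw \le 1/2$ forces... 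This is the crux I would need to nail down: showing $\sw(f_t,\utp) = \Omega(1/n)$ unconditionally (e.g., because $f_t$'s arbitrary completion still assigns a full matching and the adversary's $\utp$ is unit-sum, so $\sum_i \sum_a u_i(a) = n$ and hence the total welfare over \emph{all} $n!$ matchings averages to $1$, so $f_t$'s matching — being max-weight on the relevant edges and arbitrary elsewhere — beats a random matching's expectation $1/n \cdot n = 1$... no, a uniformly random perfect matching has expected welfare $\sum_i (1/n)\sum_a u_i(a) = \sum_i 1/n = 1$, but that does not bound a \emph{specific} matching from below). The honest resolution: combine the two regimes via $\sw(f_t,\utp) \ge \max\{\delta^{-1}(\osw - n\tau_t),\ 0\}$ and the trivial $\osw \le 1$, and observe the theorem's claim of $O(\sqrt[t]{n})$ only needs the bound when $\osw \ge 1$ since otherwise... — here I would defer to the author's argument, but my expectation is that the clean statement is: distortion $\le \delta \cdot \osw/(\osw - n\tau_t) \le 2\delta = O(\sqrt[t]{n})$ whenever $\osw \ge 2n\tau_t = 1$, and a separate elementary argument (random matching / averaging, made rigorous by noting $f_t$ can be taken to also not lose to the identity-type matching) handles $\osw<1$; the main obstacle is precisely handling this small-$\osw$ regime cleanly.
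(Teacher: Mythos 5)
You correctly extract the first half of the paper's argument: Lemma~\ref{lem:max-matching-bound} together with $n\tau_t = 1/2$ gives $\sw(f_t(\mathbf{S}),\utp) \ge \delta^{-1}(\osw - 1/2)$, and for $\osw \ge 1$ this yields distortion $\le 2\delta = O(\sqrt[t]{n})$. But you explicitly flag that you cannot close the $\osw < 1$ regime, and this is a genuine gap, not merely a presentational loose end. The fix you keep circling around but never land on is an \emph{unconditional} lower bound $\sw(f_t(\mathbf{S}),\utp) \ge \delta^{-1}/2$, obtained by applying Lemma~\ref{lem:bigraph} with a per-agent bound that does \emph{not} reference $\osw$ at all. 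Concretely: for each agent $i$, the items outside $\bigcup_j S_{i,j}$ contribute total utility at most $n\tau_t = 1/2$, so $\sum_j \sum_{a \in S_{i,j}} u_i(a) \ge 1/2$; since $u_i(a) \le \tau_{j-1} = \delta\,w(v_i,z_a)$ for $a \in S_{i,j}$, the sum of edge weights incident to $v_i$ in $G_\mathbf{S}$ is at least $\delta^{-1}/2$, while every edge weight is at least $\tau_t = 1/(2n)$. Lemma~\ref{lem:bigraph} with $W = \delta^{-1}/2$ and $L = 1/(2n)$ then produces a matching in $G_\mathbf{S}$ of weight $\ge \min\{\delta^{-1}/2,\, n/(2n)\} = \delta^{-1}/2$, and the max-weight matching $f_t$ finds does at least as well. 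With this in hand, $\sw(f_t) \ge \max\{\delta^{-1}/2,\, \delta^{-1}(\osw - 1/2)\}$, and both cases of the split $\osw \lessgtr 1$ give distortion $\le 2\delta$.

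The routes you tried for $\osw < 1$ (averaging over random matchings, hoping the arbitrary completion gives $\Omega(1/n)$) cannot work as stated, and you rightly sensed this: an average does not lower bound a specific matching, and the adversary picks $\utp$ after seeing $f_t(\mathbf{S})$, so the arbitrary completion can be given utility $0$. The point you missed is that the bound need not come from the completion at all --- it comes from the weight of the max-weight matching inside $G_\mathbf{S}$, which is controlled by Lemma~\ref{lem:bigraph} using only the structure of the approval sets and the threshold vector. You did gesture at Lemma~\ref{lem:bigraph} but set $W$ to a quantity depending on $\osw$; the lemma is meant to be applied with the $\osw$-free per-agent bound above.
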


\begin{proof}
Let $A = f_t(\mathbf{S})$ be the matching computed by the mechanism $f_t$ when given as input an arbitrary input profile $\mathbf{S}$ that is induced by some consistent utility profile $\utp$. First, observe that if $G_\mathbf{S}$ admits a matching of weight $W$ then $\sw(A,\utp) \geq W$. This follows by the fact that if node $v_i$ is matched to node $z_a$ in $G_\mathbf{S}$, then agent $i$ has utility at least $w(v_1,w_a)$ for item $a$.

Second, we argue that the weight of the maximum weight matching of $G_\mathbf{S}$ is at least $\delta^{-1}/2$. 
The total utility of an agent for the items in $\bigcup_{j \in [t]} S_{i,j}$ is at least $1/2$ since the utility for each of the remaining items is at most $\tau_t = \delta^{-t} = 1/(2n)$. Since $\tau_{j-1} = \tau_j \cdot \delta$ with $\tau_0 = 1$, and $v_i(a) \leq \tau_{j-1}$ and $w(v_i,z_a) = \tau_j$  for every $a \in S_{i,j}$, we have
\begin{align*}
    &\frac12 \leq \sum_{j \in [t]} \sum_{a \in S_{i,j}} v_i(a) \leq \sum_{j \in [t]} |S_{i,j}| \tau_{j-1} = \delta \cdot \sum_{j \in [t]} |S_{i,j}| \tau_j \\
    &\implies \sum_{j \in [t]} \sum_{a \in S_{i,j}} w(v_i,z_a) \geq \frac{\delta^{-1}}{2}.
\end{align*}
Since the edge $(v_i,z_a)$ exists in $G_\mathbf{S}$ only if $a \in \bigcup_{j \in [t]} S_{i,j}$, we have that 
$\sum_{a \in \itms} w(v_i,z_a) \geq \delta^{-1}/2$ and $w(v_i,z_a) \geq \tau_t = 1/(2n)$, and thus, by Lemma~\ref{lem:bigraph}, we have that $G_\mathbf{S}$ admits a matching of weight at least $\min\{\delta^{-1}/2,1/2\}$, which is equal to $\delta^{-1}/2$ since $\delta > 1$. 
By Lemma~\ref{lem:max-matching-bound}, since $n\tau_t = 1/2$, we also have that $G_\mathbf{S}$ admits a matching of weight at least $\delta^{-1}(\osw - 1/2)$, where $\osw$ is the optimal social welfare. 
Consequently, we overall have established that the social welfare of the allocation $A$ computed by the mechanism is 
$$\sw(A) \geq \max\bigg\{ \frac12 \cdot \delta^{-1}, \delta^{-1}(\osw - 1/2) \bigg\}.$$
Hence, the distortion is at most
\begin{align*}
\frac{\delta \cdot \osw}{\max\{1/2, \osw - 1/2\}}
\end{align*}
If $\osw \leq 1$, then the distortion is at most $2\delta\osw \leq 2\delta$.
Otherwise, if $\osw \geq 1$, then $\osw - 1/2 \geq \osw/2$ and the distortion is at most $\delta \cdot \frac{\osw}{\osw/2} = 2 \delta$.
In any case, the distortion is at most $2\delta \in O(\sqrt[t]{n}).$
\end{proof}

Due to Theorem~\ref{thm:simple_det_upb}, we can achieve linear distortion using a single threshold level (which is a significant improvement compared to the $O(n^2)$ distortion that can be achieved with ordinal information) and constant distortion using a logarithmic number of threshold levels. 

\begin{corollary}
We can achieve distortion $O(n)$ by using one threshold level and distortion $O(1)$ by using $t = O(\log{n})$ threshold levels. 
\end{corollary}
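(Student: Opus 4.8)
The plan is to derive both statements directly from Theorem~\ref{thm:simple_det_upb}, which already bounds the distortion of the mechanism $f_t$ by $O(\sqrt[t]{n})$ for every admissible number of thresholds $t \in [n]$ (with the precise bound $2\delta$, $\delta = \sqrt[t]{2n}$, coming out of its proof). So the only thing left to do is choose $t$ appropriately and check the side conditions.

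For the first claim, I would instantiate the theorem with $t = 1$: the mechanism $f_1$ uses the single threshold $\tau_1 = \delta^{-1}$ with $\delta = \sqrt[1]{2n} = 2n > 1$, and Theorem~\ref{thm:simple_det_upb} immediately gives distortion $O(\sqrt[1]{n}) = O(n)$. For the second claim, I would pick $t = \lceil \log_2 n \rceil$, which lies in $[n]$ for all $n \geq 1$, so $f_t$ is well defined. With this choice the radical appearing in the theorem satisfies $\sqrt[t]{n} = n^{1/t} \leq n^{1/\log_2 n} = 2$, hence $O(\sqrt[t]{n}) = O(1)$; equivalently, one can observe $\delta = \sqrt[t]{2n} \leq \sqrt[t]{n^2} = n^{2/t} \leq 4$ and invoke the bound $2\delta \leq 8$ from the proof of Theorem~\ref{thm:simple_det_upb}. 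Since $t = \lceil \log_2 n \rceil = O(\log n)$, this is exactly the asserted tradeoff.

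There is essentially no obstacle here: all the substance is contained in Theorem~\ref{thm:simple_det_upb}, and the corollary amounts to picking a convenient $t$ and verifying the trivial conditions $t \in [n]$ and $\delta > 1$. If anything, the only point worth a sentence is the estimate $n^{1/\lceil \log_2 n \rceil} \leq 2$ (or the variant $n^{2/\lceil \log_2 n\rceil}\leq 4$), which makes the $O(\log n)$ dependence explicit.
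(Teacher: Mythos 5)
Your proposal is correct and matches the intended derivation: the corollary is a direct instantiation of Theorem~\ref{thm:simple_det_upb} at $t=1$ and $t=\lceil\log_2 n\rceil$, with the key observation $n^{1/\lceil\log_2 n\rceil}\leq 2$ (equivalently $\delta\leq 4$, distortion at most $2\delta\leq 8$). Nothing further is needed.
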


We now turn our attention to randomization. We design a mechanism that is a convex combination of the naive rule which chooses a random matching equiprobably among all possible ones, and the deterministic matching mechanism $f_t$ that was analyzed above. 

\begin{definition}
The \textit{randomized matching mechanism $R_t$} with probability $1/2$ chooses a matching uniformly at random, and with probability $1/2$ runs the deterministic mechanism $f_t$ with threshold vector $\tauv = (\delta^{-1}, \delta^{-2}, \ldots, \delta^{-t})$ for $t \in [n]$ and some $\delta > 1$.
\end{definition}

\begin{theorem}
\label{thm:simple_rand_upb}
For $t \in [n]$ and $\delta = \sqrt[t+1]{n}$, the distortion of the randomized matching mechanism $R_t$ is $O(\sqrt[t+1]{n})$.
\end{theorem}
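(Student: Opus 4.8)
The plan is to show that on every input profile $\mathbf{S}$ the randomized mechanism $R_t$ achieves expected social welfare $\Omega(\delta^{-1}\osw)$, where $\osw$ is the optimal social welfare; since $\delta=\sqrt[t+1]{n}$, this immediately yields distortion $O(\sqrt[t+1]{n})$. Fix an input profile $\mathbf{S}$ together with an arbitrary consistent profile $\utp \rhd \mathbf{S}$. Because $R_t$ is the even mixture of (i) a uniformly random matching and (ii) the deterministic mechanism $f_t$, I would bound the social welfare contributed by each branch separately and then average.

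For branch (i): a uniformly random bijection gives each agent a uniformly random item, so by the unit-sum assumption $\E[u_i(A_i)] = \frac1n\sum_{a\in\itms}u_i(a) = \frac1n$, and summing over all $n$ agents the expected social welfare of the uniform random matching is exactly $1$. This is precisely the term that rescues us when $\osw$ is small: with the present choice of $\delta$ we have $n\tau_t$ of order $\delta \ge 1$, so --- unlike in the proof of Theorem~\ref{thm:simple_det_upb} --- we cannot lower bound the weight of a maximum matching of $G_\mathbf{S}$ by a positive constant on its own, and branch (ii) alone would not suffice.

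For branch (ii): as noted in the proof of Theorem~\ref{thm:simple_det_upb}, the matching output by $f_t$ has social welfare at least the weight of a maximum-weight matching of $G_\mathbf{S}$, which by Lemma~\ref{lem:max-matching-bound} is at least $\delta^{-1}(\osw - n\tau_t)$. Substituting $\delta=\sqrt[t+1]{n}$ gives $\tau_t = \delta^{-t} = n^{-t/(t+1)}$, hence $n\tau_t = n^{1/(t+1)} = \delta$, so $\sw(f_t(\mathbf{S})) \ge \delta^{-1}(\osw - \delta) = \delta^{-1}\osw - 1$ (valid even when the right-hand side is negative, since social welfare is nonnegative). Averaging the two branches,
\[
\E_{\mtch\sim R_t(\mathbf{S})}\big[\sw(\mtch,\utp)\big] \;\ge\; \tfrac12\cdot 1 \;+\; \tfrac12\big(\delta^{-1}\osw - 1\big) \;=\; \tfrac12\,\delta^{-1}\osw ,
\]
and therefore the distortion on $\mathbf{S}$ is at most $2\delta = 2\sqrt[t+1]{n}\in O(\sqrt[t+1]{n})$.

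I do not expect a genuine obstacle: this is the standard ``mix a safe uniform rule with a sharp deterministic rule'' trick, and the analysis merely reuses Lemma~\ref{lem:max-matching-bound} and the social-welfare lower bound extracted in Theorem~\ref{thm:simple_det_upb}. The only points needing care are the algebraic identity $n\tau_t = \delta$ forced by setting $\delta = \sqrt[t+1]{n}$ (which is exactly what makes the $-1$ in the $f_t$ bound cancel against the $+1$ from the uniform branch), and being explicit that $\sw(f_t(\mathbf{S})) \ge \delta^{-1}\osw - 1$ is invoked even in the regime $\osw < \delta$, where its right-hand side is negative and the inequality holds only because $\sw \ge 0$.
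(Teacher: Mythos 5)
Your proof is correct and follows the same approach as the paper: bound the uniform-random branch via the unit-sum assumption and the $f_t$ branch via Lemma~\ref{lem:max-matching-bound}, then average. Your finish is actually a bit cleaner than the paper's --- you exploit the exact identity $n\tau_t = \delta$ to make the $\pm 1$ terms cancel additively and get distortion $2\delta$ in one line, whereas the paper takes the max of the two branch bounds and splits into cases $\osw \ge 2s$ and $\osw < 2s$, landing on $4\delta$.
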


\begin{proof}
Let $A^*$ be an optimal matching with social welfare $s^*$, and $\mtch^2$ the matching computed by the second part of the mechanism (the outcome of $f_t$).
In the first part of the mechanism (where a random matching is chosen with probability $1/2$), since each possible matching has probability at least $1/(n!)$, each agent is matched to each item with probability at least $1/n$. Since the sum of the utilities of each agent for all items is $1$, the expected social welfare from the first part is at least 
$$\frac12 \sum_{i \in \ags} \sum_{a \in \itms} \frac{1}{n} \cdot u_i(a) = \frac12 \cdot n \cdot \frac{1}{n} = \frac12.$$ 
For the second part of the mechanism (where the deterministic mechanism $f_t$ using $\tauv$ is employed), 
since $s := n \tau_t = n \delta^{-t} = \sqrt[t+1]{n}$, by Lemma~\ref{lem:max-matching-bound}, there is a matching in $G_\mathbf{S}$ of weight at least $\delta^{-1}(\osw - s)$, and thus the expected social welfare of the mechanism from the second part is at least $\frac12 \cdot \delta^{-1}(\osw - s)$.
Overall, we have established that
\begin{align*}
\E_{\mtch \sim R_t(\mathbf{S})}[\sw(\mtch)] 
&\geq \frac12 + \frac12 \cdot \delta^{-1}(\osw - s) \\
&\geq \frac12 \cdot \max\bigg\{ 1, \delta^{-1}(\osw - s) \bigg\},
\end{align*}
and thus the distortion is at most
\begin{align*}
\frac{2 \cdot \osw}{\max\{1, \delta^{-1}(\osw - s)\}}.
\end{align*}
If $\osw \geq 2s$, then $\osw - s \geq \osw/2$ and the distortion is at most 
$\frac{2 \cdot \osw}{\delta^{-1}\osw/2} = 4\delta = 4\sqrt[t+1]{n}.$ 
Otherwise, if $\osw < 2s$, the distortion is at most 
$2\osw \leq 4s = 4\sqrt[t+1]{n}.$
In any case, the distortion $O(\sqrt[t+1]{n})$. 
\end{proof}

\section{Generalized Setting}
\label{sec:extention}

In this section we consider a generalized setting. Similarly to before, $\ags$ represents a set of $n \geq 1$ agents. However, here it is not necessarily the case that we have an equal number of items; we define $\itms$ to be a set of $m \geq 1$ items. Each item $a \in \itms$ has a supply $m_a \geq 1$, and each agent $i \in \ags$ has a capacity $c_i \geq 1$. For simplicity, we assume that the total supply is equal to the total capacity, that is, $\sz:= \sum_{i \in \ags} c_i = \sum_{a \in \itms} m_a$.\footnote{Our results hold even when $\sum_{i \in \ags} c_i = \Theta(\sum_{a \in \itms} m_a)$, in which case we would need to define $T$ as the maximum between these two quantities.} 

Agents are allowed to receive copies of the same item, in which case their utility depends on the number of copies they receive; in other words, copies of an item are not considered independent. For each agent $i$, item $a$ and $j \in [\min\{c_i, m_a\}]$, we denote by $u_i(a,j)$ the \textit{marginal utility }that agent $i$ gets when receiving his $j$-th copy of item $a \in \itms$, and by $u^+_i(a,j)$ his \textit{total utility} when receiving $j \leq \min\{c_i, m_a\}$ copies of item $a \in \itms$, i.e., $u^+_i(a,j) = \sum_{k \in [j]} u_i(a,j)$.

An {\em allocation} $X = (x_i(a))_{i \in \ags, a \in \itms}$ determines the number $x_i(a)$ of copies of item $a$ that agent $i$ is assigned to, such that $\sum_{a \in \itms} x_i(a) \leq c_i$ for every $i \in \ags$ and $\sum_{i \in \ags} x_i(a) \leq m_a$ for every $a \in \itms$. Given an allocation $X$, the utility of $i$ for $X$ is 
$u_i(X) = \sum_{a \in \itms} u^+_i(a,x_i(a))$. 
We assume that the utility function of each agent $i$ satisfies the unit-sum assumption, that is, 
$\sum_{a \in \itms} u^+_i(a, \min\{c_i, m_a\}) = 1$.\footnote{Our results hold even if we replace the unit-sum assumption by the assumption that the total utilities of the agents (as if each of them is given all items he can hold) are not equal, but known.}
The definition of the social welfare of an allocation is the same as before, that is, it is the total utility of the agents for the allocation. 
We aim to compute allocations with high social welfare that maximally assign the items to the agents; observe that the maximum possible number of items that can be allocated is $\sz$ (or $\min\left\{\sum_{i \in \ags} c_i, \sum_{a\in \itms} m_a\right\}$ in the more general case), and that any allocation $Y$ that assigns less than $\sz$ items is dominated in terms of social welfare by any allocation $X$ that assigns the items allocated by $Y$ in the same way, but also somehow assigns the remaining items. So, in the following, we focus on such maximal allocations only. 

Since the utility functions depend on the number of item copies that the agents receive, we need to appropriately redefine the \textit{elicitation method}. For a threshold vector $\tauv = (\tau_1, \ldots, \tau_t)$, each agent $i$ reports $t$ disjoint threshold approval sets $S_{i,1}, \ldots, S_{i,t}$, where $S_{i,k}$ includes pairs of items and indices for which $i$ has marginal utility in [$\tau_{k-1}, \tau_k)$, where $\tau_0 := 1$. In other words, $S_{i, k} = \{a \in \itms, j \in [\min\{c_i,m_a\}] \colon \tau_{k-1} \geq  u_i(a, j) > \tau_k\}$. The input profile $\mathbf{S}$ now consists of these threshold approval sets reported by all agents. The definition of distortion can also be appropriately refined by taking the worst case over utility profiles and input profiles consistent with them.

We now define a parametric min-cost flow instance that will be used by our algorithms later. 

\begin{definition}[Min-cost Flow Instance]
\label{def:mcf}
Let $C = (C_i)_{i \in \ags}$ be a capacity vector of the $n$ agents with maximum value $c$, 
$M = (M_a)_{a \in \itms}$ be a supply vector of the $m$ items, 
and $V_{n\times m \times c}$ a value matrix in which $V(i,a,j)$ is the value of agent $i$ when receiving $j$ copies of item $a$. 
Let $G(C, M, V)$ be the following min-cost flow instance:
    $G$ has a source node $s$ and a destination node $t$. Furthermore, there is a node $v_i$ for each agent $i$ and a node $z_a$ for each item $a$. For every $i$, there is an edge $(s,v_i)$ with capacity $C_i$ and cost $0$.
    For every $a$, there is an edge $(z_a,t)$ with with capacity $M_a$ and cost $0$.
    For each agent $i$ and item $a$, we add a component to the graph as shown in \Cref{fig:component}. 
    Nodes $s$ and $t$ have supply and demand equal to $\min\left(\sum_{i \in [n]} C_i,\sum_{a \in [m]} M_a\right)$, respectively. The goal is to find the minimum cost for satisfying this flow.
\end{definition}

\begin{figure*}[ht]
    \centering
\begin{tikzpicture}[scale=0.95]
\begin{scope}[every node/.style={inner sep=0,outer sep=1mm,fill, circle, minimum size=3mm},every label/.style={rectangle,fill=none}]
    \node [label={$v_i$}] (Vi) at (0,0) {};
    \node [label={below:$v_a$}](Va) at (12,-3) {};
    \node [label={$v_{i,a,1}$}] (V1) at (2, 0) {};
    \node [label={$v_{i,a,2}$}](V2) at (6,0) {};
    \node [label={$v_{i,a,M_a-1}$}](V3) at (8,0) {};
    \node [label={$v_{i,a,M_a}$}] (V4) at (12,0)  {};
    \node [label={below:$t$}](t) at (14,-3) {} ;
    \node [label={below:$s$}](s) at (-2,-3) {} ;
\end{scope}

\node (D) at (7, 0) {$\ldots$};

\begin{scope}[>={Stealth[black]}]
    \path [->] (s) edge node[above,sloped] {\scriptsize$C_i$} node[below,sloped] {\scriptsize$0$} (Vi);         
    \path [->] (Vi) edge node[above] {\scriptsize$M_a$} node[below] {\scriptsize$-V_{i,a,1}$} (V1); 
    \path [->] (V1) edge node[above] {\scriptsize$M_a-1$} node[below] {\scriptsize$V_{i,a,1}-V_{i,a,2}$} (V2); 
    
    \path [->] (V3) edge node[above] {\scriptsize$1$} node[below] {\scriptsize$V_{i,a,M_a-1}-V_{i,a,M_a}$} (V4); 
    \path [->] (V4) edge node[sloped,midway,above] {\scriptsize$1$} node[sloped,midway,below] {\scriptsize$V_{i,a,M_a-1}-V_{i,a,M-a}$} (Va); 
    \path [->] (Va) edge node[midway,above,sloped] {\scriptsize$M_a$} node[midway,below,sloped] {\scriptsize$0$} (t); 
    \path [->, bend right] (V1) edge node[midway,above,sloped] {\scriptsize$1$} node[midway,below,sloped] {\scriptsize$0$}(Va);
    
    \path [->, bend right] (V2) edge node[midway,above,sloped] {\scriptsize$1$} node[midway,below,sloped] {\scriptsize$0$}(Va);
    \path [->, bend right] (V3) edge node[midway,above,sloped] {\scriptsize$1$} node[midway,below,sloped] {\scriptsize$0$}(Va);
\end{scope}
\end{tikzpicture}
  
    \caption{Graph component between each pair $(i, a) \in \ags \times \itms$.}
    \label{fig:component}
\end{figure*}

It is well-known that the minimum-cost flow problem can be solved in polynomial time via linear programming (and also using various other algorithms), and we thus have the following property. 

\begin{restatable}{lemma}{integral}
\label{lem:integral}
    For capacity vector $C$, supply vector $M$ and value matrix $V$, the min-cost flow instance defined in \Cref{def:mcf} has an integral solution which we can find in polynomial time.
\end{restatable}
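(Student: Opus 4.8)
The plan is to establish two well-known facts about the min-cost flow instance $G(C,M,V)$ from \Cref{def:mcf}: first, that an optimal solution exists, and second, that there is always an integral optimal solution that can be computed efficiently. First I would observe that the constraint polytope of the flow LP is nonempty (a feasible flow exists because the total supply at $s$ equals $\min(\sum_i C_i, \sum_a M_a)$, which can always be routed since each agent node $v_i$ can absorb $C_i$ units and each item node $z_a$ can forward $M_a$ units, and the per-component gadgets have sufficient internal capacity to carry any distribution of up to $\min\{C_i,M_a\}$ units from $v_i$ to $z_a$) and bounded (all arc capacities are finite), so by LP duality / compactness an optimal solution is attained.

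Next I would invoke the integrality of the network flow polytope. The key structural point is that the constraint matrix of a min-cost flow problem — the node-arc incidence matrix of a directed graph, together with the arc-capacity upper-bound constraints — is totally unimodular. Since all arc capacities $C_i$, $M_a$, and the gadget capacities ($M_a$, $M_a-1$, $\dots$, $1$) are integers, and the supply/demand value $\min(\sum_i C_i,\sum_a M_a)$ is an integer, every vertex of the feasible polytope is integral. Hence the LP attains its optimum at an integral vertex. I would state this as: running an LP solver returns an optimal vertex, which is integral; alternatively, any combinatorial min-cost flow algorithm (e.g., successive shortest paths / network simplex) directly produces an integral optimum. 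Either route runs in time polynomial in the size of $G$, which is polynomial in $n$, $m$, and $c$ (the gadget for pair $(i,a)$ has $O(\min\{c_i,m_a\})$ nodes and arcs, so $|V(G)|, |E(G)| = \mathrm{poly}(n,m,c)$).

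The only subtlety — and the part that needs a line of care rather than real difficulty — is to confirm that the specific gadget in \Cref{fig:component}, with its chain of parallel bypass arcs to $z_a$ and the telescoping cost structure $-V_{i,a,1}$, $V_{i,a,1}-V_{i,a,2}$, $\dots$, correctly encodes the marginal-utility model, i.e., that sending $j$ units of flow through the $(i,a)$-component incurs total cost exactly $-u^+_i(a,j)$ (so that minimizing cost maximizes social welfare), and that this encoding does not break total unimodularity. Since the gadget is built purely from directed arcs with integer capacities, it is still an ordinary capacitated network, so total unimodularity is preserved automatically; the cost correctness follows from telescoping along whichever prefix of the chain the flow traverses before peeling off to $z_a$. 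I expect the main obstacle to be nothing deep — it is simply making the statement precise enough that ``min-cost flow is polynomial-time solvable and has integral optimal solutions'' can be cited cleanly; the real work, which is deferred to the later sections, is showing that the \emph{optimal cost} of this instance relates to the optimal social welfare, not that the instance itself is solvable.
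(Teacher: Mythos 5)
Your proposal is correct and matches the paper's approach: the paper simply observes, in the sentence preceding the lemma, that ``it is well-known that the minimum-cost flow problem can be solved in polynomial time via linear programming (and also using various other algorithms),'' and records the lemma as an immediate consequence, without spelling out total unimodularity or the feasibility/boundedness check as you do. Your elaboration (TU of the node--arc incidence matrix with integer capacities, the DAG structure ruling out negative cycles, polynomial graph size) is a fair unpacking of that citation, and you are right that the remaining subtlety you flag --- that the gadget's telescoping costs make the optimal flow value track the optimal social welfare --- is the content of \Cref{lem:graphToSW}, not of this lemma.
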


Our next lemma provides a connection between the solution of the min-cost flow instance of Definition~\ref{def:mcf} and the social welfare of the corresponding allocation for the instance of our problem. 

\begin{restatable}{lemma}{graphToSW}
\label{lem:graphToSW}
    The absolute value of the minimum-cost flow in $G(C, M, V)$ is equal to the maximum social welfare of an allocation of items to agents with respect to values in $V$.
\end{restatable}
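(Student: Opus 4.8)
The plan is to establish a bijection-style correspondence between integral feasible flows in $G(C,M,V)$ and allocations $X$ of items to agents, under which the (negated) cost of the flow equals the social welfare $\sum_{i,a} u_i^+(a,x_i(a))$; the lemma then follows by taking the minimum on one side and the maximum on the other, invoking \Cref{lem:integral} to ensure the optimal flow is integral. First I would examine the component gadget in \Cref{fig:component} for a fixed pair $(i,a)$: the path $v_i \to v_{i,a,1} \to v_{i,a,2} \to \cdots \to v_{i,a,M_a} \to z_a$ has the first edge with capacity $M_a$ and cost $-V_{i,a,1}$, and each subsequent edge $v_{i,a,k}\to v_{i,a,k+1}$ has capacity $M_a-k$ and cost $V_{i,a,k}-V_{i,a,k+1}$, while each intermediate node $v_{i,a,k}$ also has a ``shortcut'' edge straight to $z_a$ of capacity $1$ and cost $0$. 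The key observation is that routing $j$ units of flow from $v_i$ through this component to $z_a$ has total cost exactly $-V_{i,a,j} = -u_i^+(a,j)$ in any \emph{cost-minimizing} routing: one unit travels all the way along the path (picking up cumulative cost $-V_{i,a,1} + (V_{i,a,1}-V_{i,a,2}) + \cdots + (V_{i,a,j-1}-V_{i,a,j}) = -V_{i,a,j}$ before taking the last edge to $z_a$), and the other $j-1$ units peel off at nodes $v_{i,a,1},\dots,v_{i,a,j-1}$ via the zero-cost shortcuts; sending more than $j$ units is infeasible for the cheap edges once the marginal costs $V_{i,a,k}-V_{i,a,k+1}\ge 0$ (by monotonicity of marginals) make later units non-beneficial — I would make this precise by arguing the min-cost flow uses the prefix of cheapest edges, so if $j$ units enter the component then the cost contributed is $-u_i^+(a,j)$.

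Next I would assemble the global picture. Given any feasible allocation $X$, define a flow that sends $x_i(a)$ units through the $(i,a)$-component in the cost-minimizing way described above; the capacity constraints $\sum_a x_i(a)\le C_i$ and $\sum_i x_i(a)\le M_a$ translate exactly into feasibility of the edges $(s,v_i)$ and $(z_a,t)$, and since we may assume $X$ is maximal (allocating $\min(\sum C_i,\sum M_a)=T$ total items, as noted before \Cref{def:mcf}) the flow saturates the supply/demand at $s$ and $t$. Its cost is $-\sum_{i,a} u_i^+(a,x_i(a)) = -\sw(X)$. Conversely, given any integral feasible flow, let $x_i(a)$ be the number of units entering the $(i,a)$-component; this is a feasible (and maximal) allocation, and by the component analysis the cost of the flow is at least $-\sum_{i,a} u_i^+(a,x_i(a)) = -\sw(X)$ (with equality when the flow routes each component cost-minimally, which the optimal flow does). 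Taking the minimum-cost integral flow on one side and the maximum social welfare allocation on the other, and using \Cref{lem:integral} to know the min-cost flow is attained integrally, yields that the absolute value of the min-cost flow equals the maximum social welfare.

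The main obstacle I expect is the component-level argument that a min-cost flow necessarily routes exactly $-u_i^+(a,x_i(a))$ cost through each gadget when $x_i(a)$ units pass through it — i.e.\ that no cheaper routing exists and that the bookkeeping of which unit takes which shortcut is consistent. This requires the monotonicity (concavity) of the total utility $u_i^+(a,\cdot)$, equivalently that marginal utilities can be ordered so the incremental edge costs $V_{i,a,k}-V_{i,a,k+1}$ are the ``right sign,'' and a careful check that the shortcut edges of capacity $1$ exactly allow the $k$-th unit (for $k<j$) to leave at node $v_{i,a,k}$ without incurring the downstream marginal costs. I would handle this by an exchange/uncrossing argument: any flow that does not route a component this way can be locally rerouted to this canonical form without increasing cost, so the optimum admits the canonical structure, after which the cost accounting is the telescoping sum above. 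The remaining steps (feasibility translation, maximality, invoking \Cref{lem:integral}) are routine.
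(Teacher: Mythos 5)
Your proposal follows the same two-sided argument as the paper: build a canonical integral flow from the optimal allocation $X^*$ to show $\cst(F)\le-\sw(X^*,V)$, then read off an allocation from the min-cost flow $F$ to show $\cst(F)\ge-\sw(X^*,V)$, and invoke integrality. However, your per-gadget cost accounting contains a slip. When $j$ units enter the $(i,a)$-component and exit via the canonical routing (one unit per shortcut at $v_{i,a,1},\dots,v_{i,a,j}$), the total cost is \emph{not} $-V_{i,a,j}$: although the shortcut edges themselves cost $0$, the unit peeling off at $v_{i,a,k}$ has already traversed the path edges $(v_i,v_{i,a,1}),\dots,(v_{i,a,k-1},v_{i,a,k})$ whose costs telescope to $-V_{i,a,k}$, so the $j$ units together contribute $-\sum_{k\le j}V_{i,a,k}$. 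This sum equals $-u_i^+(a,j)$ precisely because $V_{i,a,k}$ encodes the $k$-th \emph{marginal} value (as in the paper's mechanism $g_t$, and consistent with the paper's own computation $\cst(F^*)=\sum_{i,a}\sum_{j\le x_i^*(a)}-V_{i,a,j}$), so your downstream conclusion is right but the intermediate identity ``$-V_{i,a,j}=-u_i^+(a,j)$'' is not. Beyond that, the concern you raise — whether the min-cost flow must route each gadget canonically, which requires weakly decreasing marginals so that exiting at the prefix $\{1,\dots,j\}$ is cheapest given the shrinking path-edge capacities $M_a-k$ — is a genuine subtlety. The paper's proof of the ``$\ge$'' direction addresses it only with ``doing similar calculations as above,'' which implicitly assumes the min-cost flow exits at a prefix in every gadget; your plan to make this explicit via an exchange/uncrossing argument is sound and, if anything, more careful than the paper's treatment, so this is a feature of your writeup rather than a defect.
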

  
\begin{proof}
Let $X^*$ be the allocation with the maximum social welfare w.r.t. $V$, and $F$ be the min-cost flow in $G(C, M, V)$. Let $F(v, v')$ be the flow from node $v$ to node $v'$, and $\cst(F)$ the cost of $F$. We will show that $\sw(X^*, V) = -\cst(F)$ by bounding $\cst(F)$ from above and below by $-\sw(X^*,V)$.

To prove $\cst(F) \le -\sw(X^*, V)$, we use $X^*$ to construct a flow $F^*$ as follows: 
\begin{itemize}
\item $F^*(v_i, v_{i, a, 1}) = x^*_i(a)$ for $i \in \ags$ and $a \in \itms$;
\item $F^*(v_{i, a, j}, v_{i, a, j+1}) = x^*_i(a) - j$ for $j \in [x^*_i(a)-1]$; 
\item $F^*(v_{i, a, j}, v_a) = 1$ for  $j \in [x^*_i(a)]$;
\item $F^*(s, v_i) = \sum_{ a \in \itms} x^*_i(a)$ for $i \in \ags$;
\item $F^*(v_a, t) = \sum_{ i \in \ags} x^*_i(a)$ for each $a \in \itms$.
\end{itemize}
Observe that these flows satisfy the capacities of the edges, and thus $F^*$ is valid. 
Since $F$ is the cost-minimizing flow, we have
\begin{align*}
\cst(F) &\leq \cst(F^*) \\
        &=\sum_{i \in \ags} \sum_{a \in \itms} -V_{i, a, 1} \cdot F^*(v_i, v_{i, a, 1}) 
         +\sum_{i \in \ags} \sum_{a \in \itms}\sum_{j = 1}^{x^*_i(a)-1} (V_{i, a, j}-V_{i, a, j+1}) \cdot F^*(v_i, v_{i, a, 1})  \\
        &=\sum_{i \in \ags} \sum_{a \in \itms} -V_{i, a, 1} \cdot x^*_i(a)  
         + \sum_{i \in \ags}  \sum_{a \in \itms}\sum_{j = 1}^{x^*_i(a)-1} (V_{i, a, j}-V_{i, a, j+1}) \cdot (x^*_i(a) - j)  \\
        &=\sum_{i \in \ags} \sum_{a \in \itms} -V_{i, a, x^*_i(a)}  
         + \sum_{i \in \ags}  \sum_{a \in \itms}\sum_{j = 1}^{x^*_i(a)-1} V_{i, a, j} ( (x^*_i(a) - j) - (x^*_i(a) - (j-1)) ) \\
        &=\sum_{i \in \ags} \sum_{a \in \itms} -V_{i, a, x^*_i(a)} + \sum_{i \in \ags}  \sum_{a \in \itms}\sum_{j = 1}^{x^*_i(a)-1} -V_{i, a, j} \\
        &=\sum_{i \in \ags}  \sum_{a \in \itms}\sum_{j = 1}^{x^*_i(a)} -V_{i, a, j} = -\sw(X^*, V).
\end{align*}
Next, to prove $\cst(F) \ge -\sw(X^*, V)$, we use $F$ to construct an allocation $X$ with $x_i(a) = F(v_i, v_{i, a, 1})$. Doing similar calculations as above, and since $X^*$ is the welfare-maximizing allocation, we now have that
\begin{align*}
   \cst(F) &= \sum_{i \in \ags}  \sum_{a \in \itms}\sum_{j = 1}^{x_i(a)} -V_{i, a, j} \\
   &= -\sw(X, V) \geq -\sw(X^*, V).
\end{align*}
This completes the proof. 
\end{proof}

We are now ready to present our deterministic mechanism $g_t$, which is a generalization of the deterministic mechanism $f_t$ that we used for the one-sided matching setting. 

\begin{definition}
For $\delta > 1$ and $t \in [n]$, consider the threshold vector $\tauv = (\delta^{-1}, \delta^{-2}, \ldots, \delta^{-t})$. 
The \textit{deterministic generalized matching mechanism} $g_t$ uses the threshold vector $\tauv$ and gets as input a profile $\mathbf{S}$, constant agent capacities $\{c_1, \ldots, c_n\}$, and constant item supplies $\{m_1, \ldots, m_m\}$. 
The mechanism defines the vector
$C = \langle c_1, \ldots, c_n\rangle$, the vector
$M = \langle m_1, \ldots, m_m\rangle$, and the matrix $V$ as follows:
For every $i \in \ags$, $a\in \itms$ and $j \in [\min\{c_i, m_a\}]$, if $(a, j) \in S_{i,k}$ for some $k \in [t]$, then the mechanism defines $V_{i, a, j} = \tau_k$; otherwise, if $(a, j) \not\in \bigcup_{k=1}^t S_{i, k}$, then it defines $V_{i, a, j} = 0$. 
The mechanism computes the solution of the min-cost flow instance defined in \Cref{def:mcf} with input $C$, $M$ and $V$. 
For each agent $i \in \ags$ and item $a \in \itms$ the flow from $v_i$ to $v_{i,a,1}$ in the computed solution is the number of copies of item $a$ that agent $i$ receives.
\end{definition}

Before we bound the distortion of the mechanism, we prove a technical lemma similar to Lemma~\ref{lem:max-matching-bound} which provides us with a lower bound on the social welfare of the allocation computed by the mechanism in relation to the optimal social welfare. 

\begin{lemma} \label{lem:max-matching-bound-generalized}
If there is an allocation with social welfare $\osw$, then $g_t$ outputs an allocation with social welfare at least $\delta^{-1}(\osw - T\cdot \tau_t)$. 
\end{lemma}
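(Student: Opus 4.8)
The plan is to mirror the argument of Lemma~\ref{lem:max-matching-bound}, using the flow--welfare correspondence of Lemma~\ref{lem:graphToSW} in place of the max-weight-matching bound. Fix a utility profile $\utp$ consistent with the input profile $\mathbf{S}$, and let $X^*$ be an optimal allocation with $\sw(X^*,\utp)=\osw$. I will construct a new value matrix $\widehat V$ that is a pointwise lower bound on the true marginal utilities wherever the true utilities are ``large'', and show that the welfare-maximizing allocation with respect to $\widehat V$ (equivalently, the absolute value of the min-cost flow in $G(C,M,\widehat V)$) is at least $\delta^{-1}(\osw - T\tau_t)$. Then I will relate $\widehat V$ to the matrix $V$ that $g_t$ actually uses.

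Concretely, first I would observe that for every agent $i$, item $a$ and index $j$, exactly one of two cases holds: either $(a,j)\in S_{i,k}$ for some $k\in[t]$, in which case $\tau_{k-1}\ge u_i(a,j)>\tau_k$ and $g_t$ sets $V_{i,a,j}=\tau_k=\delta^{-k}=\delta^{-1}\tau_{k-1}\ge \delta^{-1}u_i(a,j)$; or $(a,j)\notin\bigcup_k S_{i,k}$, in which case $u_i(a,j)\le\tau_t$ and $g_t$ sets $V_{i,a,j}=0$. So along every ``chain'' of marginal utilities for a fixed $(i,a)$, the $V$-values are a $\delta^{-1}$-scaled underestimate of the true marginals, except that the small marginals (below $\tau_t$) are zeroed out. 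Next I would evaluate the $V$-welfare of the specific allocation $X^*$: since $\sw$ is additive over the marginal utilities actually received, $\sw(X^*,V)=\sum_{i\in\ags}\sum_{a\in\itms}\sum_{j=1}^{x^*_i(a)} V_{i,a,j}$. Each term $V_{i,a,j}$ is either $\ge\delta^{-1}u_i(a,j)$ (when the pair is in some approval set) or $0$ (when it is not, i.e.\ when $u_i(a,j)\le\tau_t$); the total mass lost to the second type is at most $\tau_t$ per received copy, and $X^*$ distributes at most $T$ copies in total, so the loss is at most $T\tau_t$. Hence $\sw(X^*,V)\ge \delta^{-1}\big(\sw(X^*,\utp)-T\tau_t\big)=\delta^{-1}(\osw-T\tau_t)$.

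Finally, since the absolute value of the min-cost flow in $G(C,M,V)$ equals $\max_X \sw(X,V)\ge \sw(X^*,V)$ by Lemma~\ref{lem:graphToSW}, and by Lemma~\ref{lem:integral} this optimum is attained by an integral flow that $g_t$ converts into an allocation $X$ with $\sw(X,V)$ equal to that optimal flow value, it remains to note that the true welfare of $X$ dominates its $V$-welfare: for every copy that $X$ assigns, $g_t$ only assigns $(i,a)$-copies along chains where $V_{i,a,j}\le u_i(a,j)$ (indeed $V_{i,a,j}\in\{\tau_k,0\}$ and in either case $V_{i,a,j}\le u_i(a,j)$ since $u_i(a,j)>\tau_k$ on an approval set and $u_i(a,j)\ge 0=V_{i,a,j}$ otherwise), so $\sw(X,\utp)\ge\sw(X,V)\ge\delta^{-1}(\osw-T\tau_t)$, which is the claim. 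The one point requiring a little care — the main obstacle — is the bookkeeping on the flow side: I must make sure that the flow $F^*$ induced by $X^*$ (as in the proof of Lemma~\ref{lem:graphToSW}) is feasible in $G(C,M,V)$, i.e.\ that the construction of the component graph in Figure~\ref{fig:component} with $V$-costs genuinely has min-cost flow value $-\sw(X^*,V)$ or better, and that the conversion of the returned integral flow back to an allocation respects all capacity and supply constraints; both follow from Lemmas~\ref{lem:integral} and~\ref{lem:graphToSW} applied verbatim to the matrix $V$, so no new argument is needed, only the observation that everything in that proof went through for an arbitrary value matrix.
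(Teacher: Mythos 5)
Your proof is correct and takes essentially the same route as the paper's: lower-bound the $V$-welfare of an optimal allocation $X^*$ by $\delta^{-1}(\osw - T\tau_t)$ using the threshold definition of $V$, then invoke Lemma~\ref{lem:graphToSW} to transfer this bound to the allocation $g_t$ returns. You are in fact a bit tidier than the paper in two spots — you restrict the sum of $V$-values to the copies $X^*$ actually assigns (rather than all approval-set entries), and you spell out that $\sw(X,\utp)\ge\sw(X,V)$ for $g_t$'s output because $V$ is a pointwise lower bound on $u$ — but both points are exactly what the paper intends and relies on (the latter observation appears explicitly in the proof of Theorem~\ref{thm:gen_det_upb} rather than in the lemma's proof).
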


\begin{proof}
Let $X^*$ be an optimal allocation. We have
\begin{align*}
\sw(X^*) \leq \sum_{i \in \ags} \sum_{k \in [t]} \sum_{(a, j) \in S_{i, k}} u_i(a, j) + T \cdot \tau_t.
\end{align*}
Recall that, if $(a,j) \in S_{i,k}$ for some $k$, then $ u_i(a, j) \le \delta \cdot V_{i, a, j}$. This implies
$$\sw(X^*) \leq \delta \cdot \sum_{i \in \ags} \sum_{k \in [t]} \sum_{(a, j) \in S_{i, k}} V_{i, a, j} + T \cdot \tau_t.$$ 
Consequently, with respect to $V$, $X^*$ has a social welfare of at least $\delta^{-1} (\sw(X^*) - T \cdot \tau_t)$. 
This is a lower bound on the social welfare of the allocation computed by $g_t$, since, by Lemma~\ref{lem:graphToSW}, this is at least the social welfare of the allocation with maximum social welfare with respect to $V$.
\end{proof}

We are now ready to show the upper bound on the distortion of $g_t$.

\begin{restatable}{theorem}{genDetUP}
\label{thm:gen_det_upb}
For $t \in [\sz]$ and $\delta = \sqrt[t]{2\sz}$, the distortion of the deterministic generalized matching mechanism $g_t$ is $O(c\cdot \sqrt[t]{\sz})$, where $c = \max_{i \in \ags}\{c_i\}$.
\end{restatable}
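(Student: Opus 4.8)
The plan is to mirror the analysis of Theorem~\ref{thm:simple_det_upb}, but account for the fact that an agent can now receive up to $c$ copies of items, which is what introduces the extra factor of $c$. Fix an input profile $\mathbf{S}$ induced by a consistent utility profile, let $X$ be the allocation output by $g_t$, and let $\osw$ be the optimal social welfare. By Lemma~\ref{lem:graphToSW} and the construction of $g_t$, $\sw(X)$ (with respect to the true utilities) is at least the social welfare of the welfare-maximizing allocation with respect to the matrix $V$, since the true marginal utility of a pair $(a,j)\in S_{i,k}$ is at least $V_{i,a,j}=\tau_k$. So it suffices to lower bound the optimal social welfare with respect to $V$ in two different ways, exactly as in the one-sided case.

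The first bound comes from Lemma~\ref{lem:max-matching-bound-generalized}: with $\delta=\sqrt[t]{2\sz}$ we have $\sz\cdot\tau_t=\sz\delta^{-t}=1/2$, so $g_t$ achieves social welfare at least $\delta^{-1}(\osw-1/2)$. The second bound is an absolute lower bound on $\sw(X)$ that does not depend on $\osw$. Here is where $c$ enters: for each agent $i$, the total \emph{true} utility contributed by pairs $(a,j)$ that lie outside $\bigcup_k S_{i,k}$ is at most (number of such pairs)$\cdot\tau_t$; since $i$ holds at most $c_i\le c$ copies in total and each copy's marginal utility beyond the thresholds is at most $\tau_t=1/(2\sz)$, and summing over all agents the number of ``held'' copies is $\sz$, this residual mass is at most $\sz\cdot\tau_t=1/2$. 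Hence the thresholded part carries at least $1/2$ of the total unit-sum mass $n$ — more carefully, one argues that there is a feasible allocation (e.g. a maximal one respecting all capacities and supplies) whose value under $V$ is at least a $\delta^{-1}$-fraction of its true value minus the residual $1/2$ term, and that this true value is $\Omega(1)$ per agent on average. The cleanest route is: take an optimal maximal allocation; its true social welfare is some $s'$; the pairs it uses that are threshold-approved contribute, under $V$, at least $\delta^{-1}(s'-\sz\tau_t)=\delta^{-1}(s'-1/2)$; and separately, by unit-sum-ness and the fact that a maximal allocation assigns $\sz\ge n$ copies with a pigeonhole/greedy argument (analogous to Lemma~\ref{lem:bigraph}, which the paper flagged as reusable here), one can guarantee $s'=\Omega(1/c)$ — giving an absolute lower bound $\sw(X)=\Omega(\delta^{-1}/c)$.

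Combining, $\sw(X)\ge \tfrac{1}{2c}\cdot\delta^{-1}$ and $\sw(X)\ge \delta^{-1}(\osw-1/2)$, so
\begin{align*}
\dist \le \frac{\osw}{\max\{\delta^{-1}/(2c),\,\delta^{-1}(\osw-1/2)\}} = \delta\cdot\frac{\osw}{\max\{1/(2c),\,\osw-1/2\}}.
\end{align*}
If $\osw\le 1$ this is at most $2c\delta\osw\le 2c\delta$; if $\osw\ge 1$ then $\osw-1/2\ge\osw/2$ and it is at most $2\delta\le 2c\delta$. Either way $\dist\in O(c\cdot\sqrt[t]{\sz})$ for $\delta=\sqrt[t]{2\sz}$.

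The main obstacle is the absolute lower bound $\sw(X)=\Omega(\delta^{-1}/c)$: in the one-sided case each agent individually had $\ge 1/2$ threshold-approved mass and each approved item was worth $\ge\tau_t$, so Lemma~\ref{lem:bigraph} applied directly; in the generalized setting I need the right analogue of Lemma~\ref{lem:bigraph} for the min-cost flow / capacitated-allocation structure, and I must carefully argue that the $1/c$ loss (and not worse) is all that arises from the capacity constraints — essentially because some agent can always be given $\Omega(1/c)$ worth of threshold-approved copies by a greedy matching that respects supplies. I expect the bookkeeping around maximal allocations, the unit-sum normalization $\sum_a u_i^+(a,\min\{c_i,m_a\})=1$, and the distinction between marginal and total utilities to be the fiddly part, but no new idea beyond what is already set up in Lemmas~\ref{lem:bigraph}, \ref{lem:graphToSW}, and \ref{lem:max-matching-bound-generalized} should be needed.
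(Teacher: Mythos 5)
You have the right two-bound skeleton: $\sw(X) \ge \delta^{-1}(\osw - 1/2)$ from Lemma~\ref{lem:max-matching-bound-generalized}, an absolute bound $\sw(X) \ge \delta^{-1}/(2c)$, and the same final case analysis as the paper. But the ``main obstacle'' you flag is a genuine gap, and your sketched route around it does not close: if you show $s' = \Omega(1/c)$ and then plug into $\sw(X) \ge \delta^{-1}(s' - 1/2)$, the subtraction of $1/2$ makes the bound vacuous whenever $c \ge 2$, so this cannot yield $\sw(X) = \Omega(\delta^{-1}/c)$. You need to construct a \emph{specific feasible allocation whose $V$-welfare} is already $\ge \delta^{-1}/(2c)$ without any subtraction, and then invoke $V$-optimality of $g_t$'s output (via Lemma~\ref{lem:graphToSW}).

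The paper does this by bundling copies: for each agent $i$, since the at most $\sz$ non-approved pairs carry mass $\le \sz\tau_t = 1/2$, the approved pairs carry $\ge 1/2$, hence $\sum_k \sum_{(a,j)\in S_{i,k}} V_{i,a,j} \ge \delta^{-1}/2$. Now build a bipartite graph $G'$ on agents and \emph{items} (not item-copy pairs) with edge weight $w(v_i,z_a) = \max_j V_{i,a,j}$; because each item $a$ contributes at most $c$ terms $V_{i,a,j}$ to the per-agent sum, you get $\sum_a w(v_i,z_a) \ge \delta^{-1}/(2c)$ — this is exactly where the $1/c$ loss comes from, not from an ``$\Omega(1/c)$ per agent'' heuristic. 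Every edge of $G'$ has weight $> \tau_t = 1/(2\sz)$, so Lemma~\ref{lem:bigraph} gives a matching of weight $\ge \min\{\delta^{-1}/(2c),\, n/(2\sz)\}$, and $n/(2\sz) \ge 1/(2c) \ge \delta^{-1}/(2c)$ because $\sz = \sum_i c_i \le nc$. Assigning to each matched agent the best single copy of its matched item is feasible, has $V$-welfare $\ge \delta^{-1}/(2c)$, and is dominated by the $V$-optimal flow $g_t$ computes. Your intuition about where the $c$ enters is correct, and your residual-mass calculation is in the right neighborhood (though it conflates the universe of pairs an agent ranks with the copies an agent holds); what is missing is precisely this bundling reduction to Lemma~\ref{lem:bigraph}, which is the step that makes the $1/c$ quantitative.
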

\begin{proof}
The structure of the proof is very similar to that of Theorem~\ref{thm:simple_det_upb}. 
Let $X = g_t(\mathbf{S})$ be the allocation computed by $g_t$ when given as input an arbitrary input profile $\mathbf{S}$ that is induced by some consistent utility profile $\utp$. Clearly, for any $i \in \ags$, $a\in \itms$, and $j \in [\min\{c_i,m_a\}]$, $u_i(a,j) \geq V_{i, a, j}$ since $V_{i, a, j}$ is defined as a lower bound on this utility. 
By this and Lemma~\ref{lem:graphToSW}, if the graph $G(C,M,V)$ admits a flow of cost $W$, then the social welfare of the corresponding allocation is at least $|W|$. 

We now argue that the mechanism outputs an allocation with social welfare at least $\delta^{-1}/(2c)$. 
To show this, consider any agent $i$. 
Since $\tau_t = \delta^{-t} = 1/(2\sz)$, the sum of the utilities of agent $i$ for the pairs $(a, j)$ that he does not include in any approval set is at most $\sz \times 1/(2\sz) = 1/2$. Moreover, by the definition of $\tauv$, $u_i(a, j) \le \delta \cdot V_{i, a, j}$ for every $(a,j) \in \bigcup_{k =1}^t S_{i,k}$, and thus, for each agent $i$, we have
\begin{equation}\label{eq:Vbound}
\sum_{k \in [t]} \sum_{(a, j) \in S_{i, k}} u_i(a, j) \geq 1/2 \implies \sum_{k \in [t]} \sum_{(a, j) \in S_{i, k}} V_{i,a,j} \geq \delta^{-1}/2.
\end{equation}
Next, we find a partial allocation $B^*$ with social welfare at least $\delta^{-1}/(2c)$. 
To do so, we bundle all copies of each item together and assign them all to a single agent (up to the capacity of the agent). 
In particular, construct a bipartite graph $G'$ with a node $v_i$ for each agent $i \in \ags$ and a node $z_a$ for each item $a \in \itms$. 
There is an edge of weight $\max_{j \in [\min(c_i, m_a)]} V_{i, a, j}$ between $v_i$ and $z_a$. 
Equation~\eqref{eq:Vbound} implies that 
$$\sum_{k \in [t]} \sum_{(a, j) \in S_{i, k}} w(v_i, z_a) \geq {\delta^{-1}\over2c}.$$
Furthermore, for each edge $(v_i, z_a) $ in this bipartite graph, $w(v_i, z_a) > \tau_t = 1/(2T)$. 
Hence, by \Cref{lem:bigraph}, $G'$ has a matching of weight $\min\{\delta^{-1}/(2c), n/(2T)\}$. Since  
$$\frac{n}{2T} = \frac{n}{2 \sum_{i \in \ags}c_i} \geq \frac{n}{2n c} = \frac{1}{2c} \geq \frac{\delta^{-1}}{2c},$$
we finally conclude that assigning all copies of an item to its matched agent in $G'$ gives us a social welfare of at least $\delta^{-1}/(2c)$.

By Lemma~\ref{lem:max-matching-bound-generalized}, since $T\cdot \tau_t = 1/2$, we also have that a lower bound of $\delta^{-1}(\osw - 1/2)$ on the social welfare of the allocation $X$ computed by the mechanism, where $\osw$ is the optimal social welfare. So, overall, we have established that 
$$\sw(X) \geq \max\left\{ \delta^{-1}/(2c), \delta^{-1}(s^* - 1/2) \right\},$$
and the distortion is at most
$$\frac{2\delta \cdot s^*}{\max\left\{ 1/c, 2s^* - 1 \right\}}.$$
If $s^* \geq 1$, then $2s^*-1 \geq s^*$, and hence the distortion is at most $2\delta \in O(\sqrt[t]T)$. 
Otherwise, if $s^* < 1$, then the distortion is at most $2\delta c \in O(c\cdot \sqrt[t]T)$.
\end{proof}

In many applications, the capacities and the supplies are constants. In this case, we have that $T = \Theta(n) = \Theta(m)$ and the following result, which is tight given the corresponding lower bound in Section~\ref{sec:lower}.

\begin{corollary}
When the capacities and supplies are constant, for any $t \in [\sz]$, there is a deterministic mechanism with distortion $O(\sqrt[t]{n})$.
\end{corollary}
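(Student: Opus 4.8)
The plan is to simply instantiate Theorem~\ref{thm:gen_det_upb} with the constant-capacity, constant-supply assumption and track how the parameters $c$ and $\sz$ behave. First I would observe that since every capacity satisfies $c_i = O(1)$, the quantity $c = \max_{i \in \ags}\{c_i\}$ appearing in Theorem~\ref{thm:gen_det_upb} is itself $O(1)$. Next I would relate $\sz$ to $n$: because $\sz = \sum_{i \in \ags} c_i$ with each $c_i$ a constant at least $1$, we get $n \leq \sz \leq c\cdot n$, so $\sz = \Theta(n)$; symmetrically $\sz = \sum_{a \in \itms} m_a = \Theta(m)$, which is the claimed $\sz = \Theta(n) = \Theta(m)$ mentioned just before the corollary.

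Then I would invoke Theorem~\ref{thm:gen_det_upb} directly: for any $t \in [\sz]$, running the deterministic generalized matching mechanism $g_t$ with $\delta = \sqrt[t]{2\sz}$ yields distortion $O(c \cdot \sqrt[t]{\sz})$. Substituting $c = O(1)$ and $\sqrt[t]{\sz} = \sqrt[t]{\Theta(n)} = \Theta(1)\cdot\sqrt[t]{n} = O(\sqrt[t]{n})$ (the constant factor pulled out of the $t$-th root is at most $(cn/n)^{1/t} \le c$, hence bounded), the distortion bound becomes $O(\sqrt[t]{n})$, which is exactly the statement. I would also remark that the range $t \in [\sz]$ in the corollary is the natural one inherited from the theorem, and that this matches the $\Omega(\sqrt[t]{n})$ lower bound from Section~\ref{sec:lower} (which applies since one-sided matching is the special case of all capacities and supplies equal to $1$), so the bound is asymptotically tight.

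There is essentially no hard step here — the only things to be careful about are (i) making sure the constant hidden in $\sz = \Theta(n)$ does not blow up when taken to the $1/t$ power, which it does not since $(\sz/n)^{1/t} \le c^{1/t} \le c = O(1)$ uniformly in $t$, and (ii) checking that the hypotheses of Theorem~\ref{thm:gen_det_upb} (namely $t \in [\sz]$ and the choice $\delta = \sqrt[t]{2\sz}$) are exactly what we are assuming. Given those, the corollary is immediate.
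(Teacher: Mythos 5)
Your proposal is correct and matches what the paper does: the corollary is stated as an immediate instantiation of Theorem~\ref{thm:gen_det_upb} under the observations that constant capacities give $c = O(1)$ and that the budget-balance condition gives $\sz = \Theta(n) = \Theta(m)$. Your additional remark that $(\sz/n)^{1/t}$ stays bounded uniformly in $t$ is a small but valid piece of bookkeeping that the paper leaves implicit.
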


We also generalize our randomized mechanism to achieve a slightly better distortion bound. 

\begin{definition}
The \textit{ generalized randomized matching mechanism} $GR_t$ works as follows: 
With probability $1/2$, it bundles the copies of each item together and selects a matching of the items to the agents uniformly at random; once the matching has been chosen, it assigns all copies of an item to its matched agent, subject to capacity and supply constraints. 
With the remaining $1/2$ probability, it runs the deterministic mechanism $g_t$ with threshold vector $\tauv = (\delta^{-1}, \delta^{-2}, \ldots, \delta^{-t})$ for $t \in [n]$ and some $\delta > 1$. 
\end{definition}

The proof of the next theorem follows along the lines of the proof of Theorem~\ref{thm:simple_rand_upb}.

\begin{restatable}{theorem}{genRandUp}
\label{thm:gen_rand_upb}
For $t \in [\sz]$ and $\delta = \sqrt[t+1]{2\sz}$, the distortion of the generalized randomized matching mechanism $GR_t$ is $O(c\cdot\sqrt[t+1]{\sz})$, where $c = \max\{n,m\}/n$. 
\end{restatable}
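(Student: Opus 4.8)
The plan is to mirror the proof of Theorem~\ref{thm:simple_rand_upb}: bound the expected social welfare contributed by each of the two parts of $GR_t$ separately, and then combine the two bounds through a case split on whether the optimal social welfare $\osw$ is large or small relative to the quantity $s := \sz\cdot\tau_t$. Throughout, $\delta = \sqrt[t+1]{2\sz}$ and $\tau_t = \delta^{-t}$, and the generalized lemmas (Lemmas~\ref{lem:graphToSW} and~\ref{lem:max-matching-bound-generalized}) will do the heavy lifting for the deterministic part.

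For the first part of $GR_t$ (the bundled uniformly random matching), the key observation is that, by symmetry, a uniformly random matching that saturates the smaller of the two sides matches each agent to each item with probability $1/\max\{n,m\} = 1/(nc)$. When agent $i$ is matched to item $a$, bundling assigns $i$ exactly $\min\{c_i,m_a\}$ copies of $a$ — which is always feasible, since each agent holds copies of at most one item and each item is handed to at most one agent — contributing $u^+_i(a,\min\{c_i,m_a\})$ to the social welfare. Summing over items and invoking the unit-sum assumption, agent $i$'s expected utility is at least $\frac{1}{nc}\sum_{a\in\itms}u^+_i(a,\min\{c_i,m_a\}) = \frac{1}{nc}$; summing over the $n$ agents, a bundled random matching has expected social welfare at least $1/c$, so the first part of $GR_t$ contributes at least $\frac{1}{2c}$ in expectation.

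For the second part (running $g_t$), Lemma~\ref{lem:max-matching-bound-generalized} guarantees an allocation of social welfare at least $\delta^{-1}(\osw - s)$, and the choice $\delta=\sqrt[t+1]{2\sz}$ gives $s = \sz\cdot(2\sz)^{-t/(t+1)} = \tfrac12(2\sz)^{1/(t+1)} = \delta/2$. Hence the second part contributes at least $\tfrac12\delta^{-1}(\osw - s)$ in expectation, and overall
\[
\E_{X\sim GR_t(\mathbf{S})}[\sw(X)] \;\ge\; \frac{1}{2c} + \frac12\,\delta^{-1}(\osw - s) \;\ge\; \frac12\max\Bigl\{\frac{1}{c},\ \delta^{-1}(\osw - s)\Bigr\}.
\]
Dividing $\osw$ by this lower bound, the distortion is at most $2\osw/\max\{1/c,\ \delta^{-1}(\osw - s)\}$. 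If $\osw \ge 2s$, then $\osw - s \ge \osw/2$ and the bound is at most $4\delta = O(\sqrt[t+1]{\sz})$; if $\osw < 2s$, the bound is at most $2c\,\osw < 4cs = 2c\delta = O(c\sqrt[t+1]{\sz})$. Since $c\ge 1$, both cases give distortion $O(c\sqrt[t+1]{\sz})$, completing the proof.

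The only genuinely new ingredient beyond Theorem~\ref{thm:simple_rand_upb} and the generalized lemmas is the first step: the asymmetry between the number of agents and the number of items forces the $1/(nc)$ matching probability (rather than $1/n$), which is precisely where the factor $c = \max\{n,m\}/n$ enters; everything else is a routine recomputation with $\sz$ replacing $n$. I therefore expect that first step — pinning down the right notion of uniformly random matching when $n\neq m$ and checking that bundling stays feasible — to be the only place requiring care.
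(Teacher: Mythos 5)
Your proof is correct and follows essentially the same path as the paper's: split the expected welfare into the bundled random-matching part (each agent-item pair matched with probability $1/\max\{n,m\}$, giving at least $\tfrac{1}{2c}$) and the $g_t$ part (at least $\tfrac12\delta^{-1}(\osw - s)$ via Lemma~\ref{lem:max-matching-bound-generalized}), then case-split on $\osw \gtrless 2s$. The only cosmetic difference is that you obtain the $1/\max\{n,m\}$ matching probability by a symmetry argument, whereas the paper verifies it with an explicit binomial-coefficient count; the two are interchangeable.
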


\begin{proof}
The structure of this proof is similar to that of \Cref{thm:simple_rand_upb}. 
Let $X^*$ be an optimal allocation with social welfare $s^*$, and $X^2$ the allocation computed by the first part of the mechanism (the outcome of $g_t$).

In the first part of the mechanism (where a random matching between agents and items is chosen with probability $1/2$), each agent is matched to each item with probability at least $1/\max\{n,m\}$. Indeed, fix an agent $i$ and an item $a$. If $n\geq m$, there are ${n \choose m} \cdot m!$ different matchings of items to agents, and $i$ gets $a$ in ${n-1 \choose m-1}\cdot (m-1)!$ of them, leading to a probability of 
$$\frac{{n-1 \choose m-1}\cdot (m-1)!}{{n \choose m} \cdot m!} = \frac{{n-1 \choose m-1}}{\frac{n}{m}\cdot {n-1 \choose m-1} \cdot m} = \frac{1}{n}.$$ 
Otherwise, if $m > n$, there are ${m \choose n} \cdot n!$ different matchings, and $i$ gets $a$ in ${m-1 \choose n-1} \cdot (n-1)!$ of them, leading to a probability of 
$$\frac{{m-1 \choose n-1} \cdot (n-1)!}{{m \choose n} \cdot n!} = \frac{{m-1 \choose n-1}}{\frac{m}{n} \cdot {m-1 \choose n-1} \cdot n} = \frac{1}{m}.$$
Since the sum of the utilities of each agent when given all items up to capacity is $1$, the expected social welfare from the first part is at least 
$$\frac12 \sum_{i \in \ags} \sum_{a \in \itms} \frac{1}{\max\{n,m\}} \cdot u_i^+(a,\min\{c_i,m_a\}) = \frac{n}{2\cdot \max\{n,m\}}.$$ 
For the second part (where the deterministic mechanism $g_t$ using $\tauv$ is employed), 
since $s := T \cdot \tau_t = T \cdot \delta^{-t} = \sqrt[t+1]{T}$, by Lemma~\ref{lem:max-matching-bound-generalized}, $g_t$ outputs an allocation with social welfare at least $\delta^{-1}(\osw - s)$, and thus the expected social welfare of the mechanism from the second part is at least 
$\frac12 \cdot \delta^{-1}(\osw - s)$.

Overall, we have established that
\begin{align*}
\E_{X \sim GR_t(\mathbf{S})}[\sw(X)] 
&\geq  \frac{n}{2\cdot \max\{n,m\}} + \frac12 \cdot \delta^{-1}(\osw - s) \\
&\geq \frac12 \cdot \max\bigg\{ \frac{n}{\max\{n,m\}}, \delta^{-1}(\osw - s) \bigg\},
\end{align*}
and thus the distortion is at most
\begin{align*}
\frac{2 \cdot \osw}{\max\left\{\frac{n}{\max\{n,m\}}, \delta^{-1}(\osw - s)\right\}}.
\end{align*}
If $\osw \geq 2s$, then $\osw - s \geq \osw/2$ and the distortion is at most 
$$\frac{2 \cdot \osw}{\delta^{-1}\osw/2} = 4\delta = 4\sqrt[t+1]{T}.$$
Otherwise, if $\osw < 2s$, the distortion is at most 
$$2\cdot \frac{\max\{n,m\}}{n}\cdot \osw \leq \frac{4\max\{n,m\}}{n}\cdot \sqrt[t+1]{T}.$$
In any case, the distortion $O(c \cdot \sqrt[t+1]{T})$, where $c = \max\{n,m\}/n$.
\end{proof}

Again, when the capacities and the supplies are constant, since the total capacity is of the same magnitude as the total supply, it follows that $n$ and $m$ are also of the same magnitude. Hence, the parameter $c = \max\{n,m\}/n$ is a constant and $T=\Theta(n)=\Theta(m)$, giving us the following result, which is again tight due to the corresponding lower bound from Section~\ref{sec:lower}. 

\begin{corollary}
When the capacities and supplies are constant, for any $t \in [\sz]$, there is a randomized mechanism with distortion $O(\sqrt[t+1]{n})$.
\end{corollary}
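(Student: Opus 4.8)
The plan is to mirror the proof of \Cref{thm:simple_rand_upb} almost verbatim, replacing the uniform-random-matching component with the bundled-item random matching described in the statement of $GR_t$, and replacing \Cref{lem:max-matching-bound} with its generalized counterpart \Cref{lem:max-matching-bound-generalized}. Concretely, let $X^*$ be an optimal allocation with social welfare $s^*$. The mechanism is a $1/2$-$1/2$ convex combination, so $\E_{X \sim GR_t(\mathbf{S})}[\sw(X)] = \tfrac12 \cdot (\text{first part}) + \tfrac12 \cdot (\text{second part})$, and I will lower bound each part separately and then combine.

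First I would analyze the random-matching part. The key combinatorial fact is that, when one bundles all copies of each item and matches items to agents uniformly at random among all $\binom{\max\{n,m\}}{\min\{n,m\}}\cdot \min\{n,m\}!$ such matchings, each agent receives each item with probability exactly $1/\max\{n,m\}$; I would verify this by the two short counting computations (the cases $n \geq m$ and $m > n$), each of which reduces to a ratio of binomial coefficients and factorials that simplifies to $1/n$ or $1/m$ respectively. Since assigning agent $i$ all copies of item $a$ (up to $\min\{c_i,m_a\}$) yields total utility $u_i^+(a,\min\{c_i,m_a\})$, and by unit-sum $\sum_{a \in \itms} u_i^+(a,\min\{c_i,m_a\}) = 1$ for every $i$, linearity of expectation gives expected social welfare at least $\sum_{i}\sum_{a} \tfrac{1}{\max\{n,m\}} u_i^+(a,\min\{c_i,m_a\}) = \tfrac{n}{\max\{n,m\}}$ from this part alone, i.e.\ at least $\tfrac{n}{2\max\{n,m\}}$ after the $1/2$ weighting.

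Next I would handle the $g_t$ part. Setting $\delta = \sqrt[t+1]{2\sz}$ gives $\tau_t = \delta^{-t} = (2\sz)^{-t/(t+1)}$, so $s := \sz\cdot\tau_t = \sz^{1/(t+1)}(2)^{-t/(t+1)} = \Theta(\sqrt[t+1]{\sz})$; keeping the constant $2$ explicit, $s = \sqrt[t+1]{\sz/2^{t}}\leq \sqrt[t+1]{2\sz}=\delta$ works, or one simply tracks the constant. By \Cref{lem:max-matching-bound-generalized}, $g_t$ outputs an allocation of social welfare at least $\delta^{-1}(s^* - \sz\tau_t) = \delta^{-1}(s^* - s)$, contributing at least $\tfrac12\delta^{-1}(s^*-s)$ after weighting. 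Combining the two parts,
\begin{align*}
\E_{X \sim GR_t(\mathbf{S})}[\sw(X)] \geq \frac12\max\left\{\frac{n}{\max\{n,m\}},\ \delta^{-1}(s^* - s)\right\},
\end{align*}
so the distortion is at most $\dfrac{2 s^*}{\max\{n/\max\{n,m\},\ \delta^{-1}(s^* - s)\}}$.

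Finally I would do the case split on the size of $s^*$ relative to $s$, exactly as in \Cref{thm:simple_rand_upb}. If $s^* \geq 2s$ then $s^* - s \geq s^*/2$, so the second term in the max is at least $\delta^{-1}s^*/2$ and the distortion is at most $\dfrac{2s^*}{\delta^{-1}s^*/2} = 4\delta = O(\sqrt[t+1]{\sz})$. If instead $s^* < 2s$, then using only the first term in the max, the distortion is at most $2s^*\cdot\dfrac{\max\{n,m\}}{n} \leq \dfrac{4\max\{n,m\}}{n}\cdot s = O\!\left(\dfrac{\max\{n,m\}}{n}\cdot\sqrt[t+1]{\sz}\right)$. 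In either case the distortion is $O(c\cdot\sqrt[t+1]{\sz})$ with $c = \max\{n,m\}/n$, as claimed. I do not anticipate a genuine obstacle here since the template is fixed by \Cref{thm:simple_rand_upb}; the only point requiring care is the probability computation for the bundled random matching when $n \neq m$ (getting the normalization $1/\max\{n,m\}$ right, and checking the degenerate cases like $m > n$ where some agents necessarily receive no item), and keeping the constant factors from $\delta = \sqrt[t+1]{2\sz}$ straight so that $s$ and $\delta$ line up in the final bound.
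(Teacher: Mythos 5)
Your proposal is essentially a correct reproduction of the paper's proof of Theorem~\ref{thm:gen_rand_upb}, which establishes the bound $O(c\cdot\sqrt[t+1]{\sz})$ with $c = \max\{n,m\}/n$: the probability $1/\max\{n,m\}$ for the bundled uniformly random matching (both the $n\geq m$ and $m>n$ counting arguments), the appeal to Lemma~\ref{lem:max-matching-bound-generalized}, the $\tfrac12$-$\tfrac12$ decomposition of the expected welfare, and the case split on $\osw$ versus $2s$ all match the paper. You also rightly flag that the paper's simplification $s=\sqrt[t+1]{\sz}$ is only $\Theta(\sqrt[t+1]{\sz})$ once the constant $2$ in $\delta=\sqrt[t+1]{2\sz}$ is tracked, which is harmless asymptotically.

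However, there is a small but genuine gap: you stop at the theorem's bound and never perform the specialization that actually yields the corollary's stated $O(\sqrt[t+1]{n})$. Your final ``as claimed'' is therefore misplaced, because the corollary claims $O(\sqrt[t+1]{n})$, not $O(c\cdot\sqrt[t+1]{\sz})$. The missing step is exactly the one the paper supplies in the sentence preceding the corollary: when every $c_i$ and every $m_a$ is bounded above by a constant, one has $n \leq \sz = \sum_{i}c_i \leq (\max_i c_i)\,n$ and $m \leq \sz = \sum_a m_a \leq (\max_a m_a)\,m$, so $\sz=\Theta(n)=\Theta(m)$; consequently $c=\max\{n,m\}/n=O(1)$ and $\sqrt[t+1]{\sz}=\Theta(\sqrt[t+1]{n})$, whence $O(c\cdot\sqrt[t+1]{\sz})=O(\sqrt[t+1]{n})$. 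Adding this one observation closes the argument; everything else in your write-up is sound and follows the paper's own route.
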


\begin{remark}
In the generalized setting that we considered in this section, the agents are allowed to receive potentially all available copies of the items, up to their capacity. However, in several applications, we might want to disallow this and set a limit $\ell_{i,a}$ on the number of copies of $a \in \itms$ that agent $i \in \ags$ can get. For example, in the paper assignment problem, each agent must be given at most one copy of each item since it does not make sense for someone to review a paper more than once. Such constraints can be handled in several ways. One of them is via the utility functions of the agents which, for scenarios like these, would simply assign a marginal value of $0$ for any extra copy that exceeds the limit, that is, $u_i(a,j)=0$ for every $j > \ell_{i,a}$. If the utility function is not naturally defined this way, we can modify the min-cost flow instance by setting $V_{i, a, j} = -\infty$ for $j > \ell_{i, a}$, or by removing the corresponding edges in the graph.
\end{remark}

\section{Conclusion and Open Problems} \label{sec:conclusion}
In this paper, we showed tight bounds on the best possible distortion of (both deterministic and randomized) mechanisms for matching settings (that capture important applications, including the one-sided matching problem and the paper assignment problem) when the elicited information about the preferences of the agents is of the form of threshold approvals. Going forward, it would be interesting to explore whether improved tradeoffs can be achieved by using randomization not only for the decision phase of the mechanism but also for the definition of the threshold values, similarly to the works of \citet{benade2021participatory,bhaskar2018truthful}. Furthermore, one could explore other settings in which the same type of elicitation method can be applied, including voting settings (both utilitarian and metric) in which the full potential of using multiple threshold approvals has not been considered before, as well as other resource allocation settings, potentially also in combination with other constraints, such as truthfulness or fairness. 

\bibliographystyle{named}
\bibliography{bib}

\end{document}